\newcommand{\subsubsubsection}[1]{\noindent\textbf{•}\hspace{1ex}#1}
\newtheorem{definition}{Definition}
\newtheorem{lemma}{Lemma}
\newtheorem{theorem}{Theorem}
\begin{document}

\title{MLego: Interactive and Scalable  Topic Exploration Through Model Reuse}

\author{Fei Ye, Jiapan Liu, Yinan Jing, Zhenying He, Weirao Wang, X. Sean Wang

\thanks{F. Ye, J. Liu, W. Wang, Y. Jing, Z. He, X. Sean Wang are with the School of Computer Science, Fudan University, Shanghai 200438, China. Email: {fye21, 21210240251, 21212010036}@m.fudan.edu.cn, \{jingyn, zhenying, xywangCS\}@fudan.edu.cn.}

\thanks{Z. He is the corresponding author.}

}


\maketitle

\begin{abstract}

With massive texts on social media, users and analysts often rely on topic modeling techniques to quickly extract key themes and gain insights. 
Traditional topic modeling techniques, such as Latent Dirichlet Allocation (LDA), provide valuable insights but are computationally expensive, making them impractical for real-time data analysis. Although recent advances in distributed training and fast sampling methods have improved efficiency, real-time topic exploration remains a significant challenge. In this paper, we present MLego, an interactive query framework designed to support real-time topic modeling analysis by leveraging model materialization and reuse. Instead of retraining models from scratch, MLego efficiently merges materialized topic models to construct approximate results at interactive speeds. To further enhance efficiency, we introduce a hierarchical plan search strategy for single queries and an optimized query reordering technique for batch queries. We integrate MLego into a visual analytics prototype system, enabling users to explore large-scale textual datasets through interactive queries. Extensive experiments demonstrate that MLego significantly reduces computation costs while maintaining high-quality topic modeling results. MLego enhances existing visual analytics approaches, which primarily focus on user-driven topic modeling, by enabling real-time, query-driven exploration. This complements traditional methods and bridges the gap between scalable topic modeling and interactive data analysis.

\end{abstract}

\begin{IEEEkeywords}

Text analytics, topic modeling, interactive analysis, 
analytic query, query optimization.
\end{IEEEkeywords}

\section{Introduction}
    
\IEEEPARstart{M}{assive} texts, such as tweets or reviews, are generated in social media applications like Twitter, Google Maps, and Airbnb. Most of these services provide users with the ability to access data based on user-specified attributes via location and time, such as "here and now". There is an increasing desire for users to extract insights from data via searching. For example, when users travel to unfamiliar places, they can plan their trips using Google Maps by analyzing comments and reviews written by local residents or tourists within the destination area to identify hot topics, events, etc.

Considering the example shown in Figure 1, Oliver is planning a trip to Paris. Besides popular attractions such as the Louvre Museum or the list of highly-rated spots provided by Google Maps, he wishes to discover niche tourist attractions to experience something off the beaten path by exploring people's reviews on Google Maps. As shown in Figure 1, Oliver can access reviews within different locations, such as $Q$. After analyzing reviews within $Q$, Oliver is particularly interested in some words: "Café des Deux Moulins, Amelie, Creme brulee." These words refer to Café des Deux Moulins, a famous location featured in the French film "Amelie", where the iconic scene involves the main character, Amelie, cracking the caramelized top layer of a creme brulee with a small spoon. Given Oliver's deep admiration for this film, he decides to include 'Café des Deux Moulins' in his travel itinerary, even though it's not on the top of the recommendation list.

To identify hot topics or events, topic modeling is an efficient way.
 We refer to Oliver's requirement as the problem of \textit{Analytic Query on Topic Modeling}. Given a user-specified region, Analytic Query on Topic Modeling models uncovers latent topics in the texts within this region. 




\begin{figure}[t]
\centerline{\includegraphics[width=0.43\textwidth]{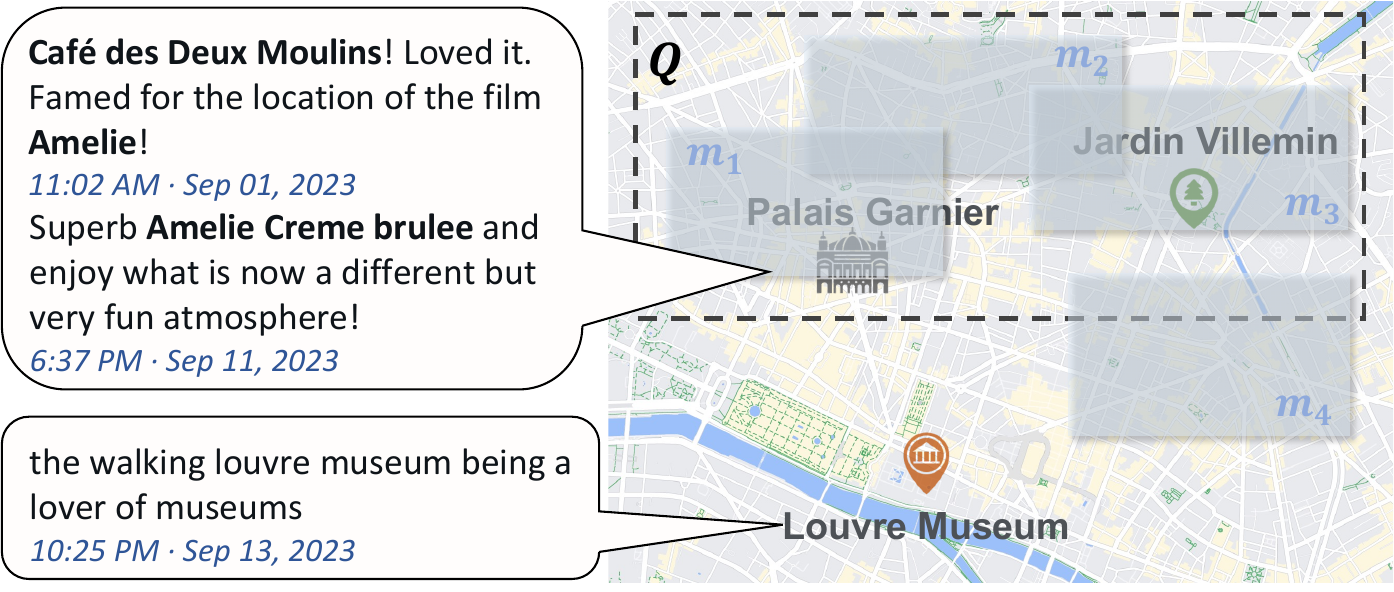}}
\caption{An example of an analytic query over a user-specified region.}
\label{fig_intro}
\end{figure}


Although existing topic modeling methods, such as LDA \cite{lda}, have proven effective in discovering latent topics within textual data \cite{LDA_survey_ACM, LDA_survey_IJACSA}, the challenge of prolonged model training time remains when handling large-scale datasets. For instance, in our scenario, when Oliver requests information for the $Q$ region, he repeatedly compares topics from reviews across different partitions to inform his travel planning. However, this process often requires frequent adjustments to ad-hoc analytics, with each modification involving a complete model training cycle. This makes the entire data analysis workflow cumbersome, posing significant challenges for supporting interactive data analysis tasks, such as Exploratory Data Analysis (EDA), which demand rapid response times. While several visual analytics systems\cite{Architext, topicLens, A23, A32B32, A9B8, B13, B14, B22, B44} have been developed for text data to facilitate human-in-the-loop topic modeling, these methods predominantly focus on visualizing already-trained models and steering the underlying topic modeling process. Achieving true interactive analysis, as described in our scenario, is still time-consuming and difficult to realize, unless a new model is constructed from scratch for each analysis, a process that remains highly inefficient and impractical. Despite recent efforts to accelerate model training through distributed techniques \cite{LDA*} and rapid sampling methods \cite{jmlr_online_17, AliasLDA, F+LDA, LightLDA}, it has been reported that commercial LDA applications still require the use of over 6,400 machines and several hours of computation time for model training \cite{LDA*, CuLDA}, severely restricting the practical applicability of LDA for large-scale real-time data analysis.


Inspired by commercial databases integrating machine learning (e.g., IBM System ML, Oracle ORE), we borrowed key concepts from data management—materialization and reuse to enhance visual analytics systems for interactive end-to-end analysis. Instead of discarding models after one-time use, we reuse materialized models to efficiently answer analytic queries, leveraging users’ tolerance for approximate models at interactive speeds. For instance, in exploratory analysis, data scientists often accept approximate ML models if they can rapidly obtain "close enough" estimates \cite{vldb_Hasani}. A key challenge is balancing model accuracy and response time for personalized customization. 
In Figure 1, to answer $Q$, the system can provide responses based on materialized models $m_1$, $m_2$, and $m_3$. Unlike stand-alone systems where users solely rely on their historical models, it is easy to see that materializing models for all user requests would result in a large number of models with overlapping predicate ranges. In our framework, when the model coverage within a query reaches a certain level, training time is significantly reduced. At this time, as there is an exponential number of candidate plans, enumerating all query plans would, in turn, become a hindrance to rapid data analysis.



To address these challenges, we propose "MLego", an analytic query framework that reuses materialized models like Lego bricks for interactive data analysis. MLego supports LDA model merging for two approximate posterior inference algorithms, allowing users to balance accuracy and efficiency via a user-defined cost function. It first searches for the optimal plan and then builds models online by reusing materialized ones. We further introduce a top-$k$-based query optimizer \cite{topk-survey} to accelerate plan search and optimize batch query execution.


In summary, we make the following contributions:

\begin{itemize}
\item We present an analytic query framework, "MLego," which efficiently executes LDA analytic queries through interactive model materialization and reuse,  accelerating model construction to enhance visual analytics systems for interactive end-to-end analysis.



\item We propose an optimization framework that combines hierarchical plan search for single queries with cost analysis and query reordering for batch queries, effectively reducing the overall execution cost by pruning the search space and optimizing query order.

\item We conduct extensive experiments on multiple datasets, demonstrating that MLego substantially accelerates the model construction process. Additionally, we implement a visualization prototype system for large-scale document analysis, utilizing the MLego framework to facilitate insightful data exploration.


\end{itemize}

\section{Related Work}
    
\textbf{Interactive Analytics for Topic Modeling}. In topic modeling, the primary goal is to facilitate interactive exploration of document collections, enabling users to extract insights and relationships through topic summaries of the entire corpus. Several approaches\cite{topicLens, A23, A32B32, A9B8} focus on interactive steering of the topic modeling process, allowing dynamic recomputation of topic models. For instance, TopicNets\cite{A23} iteratively updates topic modeling results based on user-selected document subsets. Additionally, human-in-the-loop topic modeling techniques allow users to steer underlying models to refine results\cite{B9, B11, B13, B14, B22, B44}. Our work, however, is orthogonal to these approaches. Unlike existing methods that focus on interactive modifications at the word, document, or topic level of a single model, we focus on enabling users to quickly construct topic models by merging existing LDA models via a query-driven approach, offering interactive analysis on large document corpora. Furthermore, while some works in Topic Modeling for Visual Analytics analyze combinations of different models for user-specified document subsets, we take a different approach by performing interactive analysis based on dynamic combinations of pre-trained topic models, rather than static visual analytics on a single model. In contrast to Mixed-Initiative systems, which require direct manipulation of model parameters and an in-depth understanding of the underlying model mechanisms\cite{B24, B16, B17, B47, B15, B26, B30, B38}, our approach presents users with semantically interpretable parameters. Users can simply specify a query range and desired model preferences to perform the interactive analysis.


\textbf{Model Training Acceleration}. Recently, there has been extensive work on speeding up model training, such as Gibbs samplers, distributed training methods, and online inference algorithms. Gibbs samplers aim to reduce the sampling complexity of LDA,  \cite{F+LDA, AliasLDA, LightLDA,fastLDA} are all designed for specific trade-off spaces, and \cite{LDA*} is designed as a hybrid sampler to choose a suitable sampler based on the length of documents. Distributed training methods\cite{F+LDA, YahooLDA, ELDA, AD-LDA, LDA*} partition documents and employ workers on them to scale up the training of LDA for large datasets. Research on online inference algorithms\cite{jmlr_online_17} has shown promising results in LDA training. Different from the above methods, MLego only necessitates the management of pre-trained models without expensive training resource costs, and our method can also be easily customized upon these systems.


\textbf{Analytic Query Speedup}. The study of speeding up analytic queries has gained interest in the database community recently. Techniques such as approximate query processing (AQP) and materialization are employed to support aggregate queries. AQP\cite{AQP}, along with similar techniques like sampling and corsets\cite{coreset1}, offers approximate results for tasks that require real-time performance. Materialization\cite{DataCube, materialization} involves partially computing results using OLAP cubes to answer queries. Our approach also relies on the fact that exact answers are not always required. We concentrate on analytic queries on topic modeling, and we treat models as intermediate results on datasets for materialization and reuse.

\section{Background}

\subsection{ML Primer}



\textbf{Latent Dirichlet Allocation (LDA)} \cite{lda, LDA_survey_ACM, LDA_survey_IJACSA} is a widely used hierarchical Bayesian probabilistic model of texts for topic modeling. Let $D$, $V$, $K$ be the number of documents, topics and unique words, respectively. Each topic defines a multinomial distribution over the vocabulary and is assumed to have been drawn from a Dirichlet with the parameter $\eta$, $
\beta_k = \left(\beta_{k v}\right)_{v=1}^V \sim \operatorname{Dirichlet}(\eta)$. Each document is an admixture of topics and the words in document $d$ are exchangeable. Each word $w_{dn}$ is from a latent topic $z_{dn}$ chosen in line with a document-specific distribution of topics $\theta_d=\left(\theta_{d k}\right)_{k=1}^K$ with the parameter $\alpha$. The posterior is

\begin{equation}
\begin{aligned}
&p(\beta, \theta, z| C, \eta, \alpha)
\propto 
\left[\prod_{k=1}^K \operatorname{Dirichlet}\left(\beta_k \mid \eta_k\right)\right] \\ &\cdot\left[\prod_{d=1}^D \operatorname{Dirichlet}\left(\theta_d \mid \alpha\right)\right]  \cdot\left[\prod_{d=1}^D \prod_{n=1}^{N_d} \theta_{d z_{d n}} \beta_{z_{d n}, w_{d n}}\right].
\end{aligned}
\end{equation}


For LDA and many other Bayesian models\cite{nips10_online}, computing the posterior is intractable. Two main approximate inference methods address this: sampling and optimization. Sampling methods, like Gibbs sampling, can handle arbitrary distributions and converge to exact inference\cite{jmlr_online_17}. Optimization methods, such as Variational Bayesian inference, approximate the posterior by optimizing a surrogate distribution.

\begin{table}[H]\hypertarget{notations}{}
  \caption{Frequently used notations in the paper}
  \label{tab:freq}
  \begin{tabular}{c|m{6.5cm}}
    \toprule
        \textbf{Notation}&\textbf{Meaning}\\
    \midrule
        $q$, $\alpha$ & A query, and a query-level weight parameter specified by a user.\parbox[c]{5cm}{}\\
        $p^*$ & A optimal query plan. \parbox[c]{5cm}{}\\
        $Q$ & A batch of query. \parbox[c]{5cm}{}\\
        $sc$ & The score of a plan $\mathcal{P}$ \parbox[c]{5cm}{}\\
        $m$, $m^*$ &  A model and the approximate model of plan $\mathcal{P}$ \parbox[c]{5cm}{}\\
        $\mathscr{F}$ & The ML operator performed by query $q$. \parbox[c]{5cm}{}\\
        $D$, $\sigma$ & A dataset, and the query predicate specified by a user. \parbox[c]{5cm}{}\\
        $M$ & The materialized model set. \parbox[c]{5cm}{}\\
        $M_i$ & The maximum number of iterations for LDA training. \parbox[c]{5cm}{}\\
        $\mathscr{A}$, $c_{t}$, $l_{p}$ & The performance metric, time cost.  \parbox[c]{5cm}{}\\
        $C$,$w_d$,$w_{dn}$ & A corpus, a document and a word. \parbox[c]{5cm}{}\\
    \bottomrule
\end{tabular}
\end{table}



\subsection{Query Model}

\textbf{Data, Pre-built Model:} 
For analytic queries, such as $Q$ in Figure 1, data is defined as $C=\{o_1, o_2, ... o_N\}$, where $N$ is the amount of data, and $o_i$ represents attributes like IP, timestamp, or tag of reviews. For model training, The data is defined as a corpus $C=w=\left(w_d\right)_{d=1}^D
$ of documents. Each document $w_d=\left(w_{d n}\right)_{n=1}^{N_d}$ is distributed iid conditioned on the global topics. The same definitions apply to training, validation, and test splits of data. The materialized model $m$ is defined as a tuple $<o, N, \Theta>$, where $N$ represents the model trained on $N$ data, $o$ is the dimension attributes, and $\Theta$ encapsulates parameters specific to the model type to support model merging.

\textbf{Analytic Query and Plan:} An analytic query $q$ is defined as an SQL statement with ML UDFs. In Figure 1, the query $Q$ searches for data via a certain IP range, and the output is the topic distribution generated by the LDA model. A plan $p$ is defined as a combination of materialized models, each with non-overlapping training data. The candidate plans for $Q$ are \{\{$m_1$\}, \{$m_2$\}, \{$m_3$\}, \{$m_1$, $m_2$\}\}.



\section{Problem Statement}

\subsection{System Architecture}

In Figure \hyperlink{overviewfig}{2}, the inputs to MLego are query predicates, datasets, model sets, and the weight parameter. MLego selects 
the model set and data set according to the query predicate, and then generates the optimal plan $p^*$ based on the weight parameter $\alpha$. We train according to $p^*$ to obtain the approximate model $m^*$.
MLego also provides a batch-query optimization strategy. When the input is the batch query, the relationship between multiple queries is calculated to reduce the overall return time.

\begin{figure}[htbp]\hypertarget{overviewfig}{} 
\centerline{\includegraphics[width=0.44\textwidth]{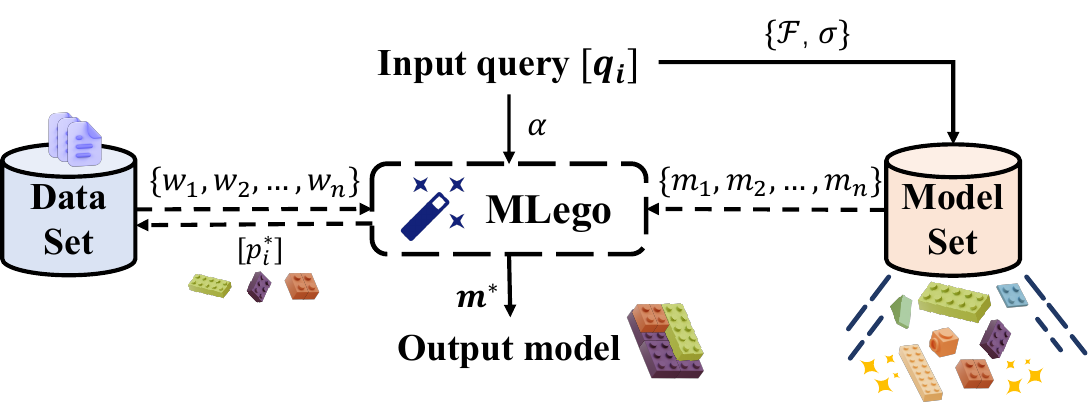}}
\caption{System architecture}
\label{fig_overview}
\end{figure}

\noindent\textbf{Definition 1} (Analytic Query on ML). An analytic query $q$ is characterized by a tuple

\begin{displaymath}
    \{\mathscr{F}, \alpha, D, \sigma, M\}
\end{displaymath}


where $\alpha$ is a weight parameter applying weights to different metrics on the query plan, and $D$ is the dataset used by $\sigma$ and $\alpha$ to produce the output model $m^*$. $M$ is the set of models obtained by applying the ML model $\mathscr{F}$ on $D$. In this paper, we focus on LDA, i.e., $\mathscr{F}$ is LDA.

\subsection{Plan Cost}\hypertarget{section3.2}{}

Plan Cost is the execution cost of the plan, including not only the time cost but also the performance loss due to model merging. Given a query plan, we will first \textbf{train} models using data $D$ uncovered by existing models, and then \textbf{merge} the trained models with pre-built models in this plan. Next, we will introduce what (1) time cost and (2) performance loss are in the above two processes, respectively.


\textbf{Performance Loss:} Performance loss measures the impact of model merging on performance. For a query $q$, let $m$ be the model built from scratch on $D_{q}$, and $m^*$ be the approximate model from merging. The performance loss for metric $\mathscr{A}$ is $l_{p} = \left|\mathscr{A}(m)-\mathscr{A}(m^*)\right|$.

\textbf{Time Cost:}
We use Time Cost to measure execution time of the plan $p^*$. We divide the time cost into training time and merging time cost. Among them, the training time is the cost we spend training the data uncovered by materialized models in $p^*$ and merging time is the time spent on model merging. We denote it as $c_{t}$.

It is worth noting that in this paper, we only divide plan cost into two types. In fact, plan cost may be more complicated, but this does not affect the effectiveness of MLego. Users can flexibly define more types of costs as needed.

\subsection{Problem Statement}\hypertarget{section3.3}{}
Based on the plan cost, we now define our score-based plan searching problem. We aim to find a query plan with both minimal time cost and optimal performance. We encode these two objectives in an intuitive linearized score function.

\begin{definition}
(Score Function). Let $\alpha$ $\in$ [0,1] be a weight parameter that represents the preference for performance in plan cost. Then, we define the score $sc$ of a plan $p$ as
\begin{equation}
sc = \alpha l_{p} + (1-\alpha) c_{t}.
\end{equation}


We linearize time cost and performance loss, weighting them via a user-defined parameter $\alpha$. A smaller $\alpha$ indicates stricter response time requirements, allowing users to intuitively control query return constraints. The optimal plan is determined by minimizing $sc$.

\end{definition}


\textbf{Plan Searching Problem:} The above score function allows finding the optimal query plan that shows a score less than any query plans with $sc(p) > 0$. The score-based plan searching problem is defined as follows:

\begin{definition}
(Score-based Plan Searching). Given the input query $q$, model set $M$, dataset $D$, as well as the weight parameter $\alpha$, find the optimal plan $p^*$ — from all possible plans $P$ — that satisfy the following condition:

\begin{equation}
\begin{aligned}
p^* = \mathop{\arg\min}_{p \in P} sc(p)
\enspace \text { s.t. } sc(p) > 0.
\end{aligned}
\end{equation}
\end{definition}

An intuitive method to solve this problem is to compute the score of all plans and then rank them. However, this method is time-consuming when $|M|$ is large. Thus, we propose a plan searching method in MLego. Table \hyperlink{notations}{1} summarizes the notations used in the paper.

\section{MLego Technique}
    \subsection{MLego: Model Merging Techniques}\hypertarget{section4}{}


\subsubsection{Model Merging for LDA}

Given a query $q$, our objective is to efficiently output the topic parameter $\widetilde{\beta}$ such that perplexity for $\widetilde{\beta}$ is close to the perplexity of $\beta$ where $\beta$ is the topic parameter obtained by running
LDA algorithm from scratch on the entire $C$. We seek to do this by only using $<o, N, \Theta>$ of model $m_i$. Different posterior approximation algorithms correspond to different model parameters $\Theta(m_i)$.


The following model merging methods, based on Mean-field Variational Bayes and Collapsed Gibbs Sampling, are all inspired by obtaining posterior distributions $p\left(\Theta \mid C_1, \ldots, C_b\right)$ via the recurrence relation:

\begin{equation}
\begin{aligned}
p\left(\Theta \mid C_1, \ldots, C_b\right) \propto p\left(C_b \mid \Theta\right) p\left(\Theta \mid C_1, \ldots, C_{b-1}\right),
\end{aligned}
\end{equation}

where $C_i$ is the batch of data, it automatically gives us the new posterior without needing to revisit old data points.

\begin{algorithm}
  \caption{Merging Bayesian Updating}\hypertarget{alg2}{}
  \begin{algorithmic}[1]
  \STATE \textbf{Input:} Set of LDA models $M$, $K$, $\eta$
  \STATE \textbf{Output:} $\theta$ of output model
  \STATE initialize $\lambda_{0}$ = $\lambda^{post}$ = $\eta$ ;\\
  \FOR{$m_i$ in $M$}
    \STATE ($N_i$, $\lambda_i$) in $\theta(m_i)$ ;\\
    \STATE $\Delta$$\lambda_i$ = $\lambda_i - \lambda_{0}$ ;\\
    \STATE $
    \lambda^{post} \leftarrow \lambda^{post}+\Delta \lambda
    $
  
  \ENDFOR
   \STATE  
    $
        q(\beta)=\prod_{k=1}^K \operatorname{Dirichlet}\left(\beta_k \mid \lambda_k^{post}\right)
    $
    \end{algorithmic}
\end{algorithm}

\subsubsubsection{\textbf{Mean-field Variational Bayes.}}
The idea of VB is to find the distribution $q_D$ that best approximates the true posterior $p_D$. We assume the approximating distribution, written $q_D$ for shorthand, takes the following form:

\begin{equation}
\begin{aligned}
&q_D(\beta, \theta, z \mid \lambda, \gamma, \phi)= \left[\prod_{k=1}^K q_D\left(\beta_k \mid \lambda_k\right)\right] \\
&\cdot\left[\prod_{d=1}^D q_D\left(\theta_d \mid \gamma_d\right)\right] \cdot\left[\prod_{d=1}^D \prod_{n=1}^{N_d} q_D\left(z_{d n} \mid \phi_{d w_{d n}}\right)\right].
\end{aligned}
\end{equation}

We have $q_D$ in the form of Eq.(5) and $p_D$ defined by Eq.(1). More specifically, the optimization problem of VB is defined as finding a $q_D$ to minimize the KL divergence between $q_D$ and $p_D$.



We use the variational parameters $\lambda$, $\gamma$, and $\phi$ to describe each topic, the topic proportions in each document, and the assignment of each word in each document to a topic, respectively. We assume the above three parameters are updated iteratively through the coordinate-descent algorithm. 

As the above VB-based posterior-approximation algorithm take Dirichlet distributions as prior for the topic parameters $\beta$ and return Dirichlet distributions for the approximate posterior of $\beta$, and the prior and approximate posterior are in the same exponential family, we employ "weighted" SDA-Bayes \cite{SGS} to achieve LDA merging. In a nutshell, the SDA-Bayes framework makes Streaming, Distributed, and Asynchronous Bayes updates to the estimated posterior according to a user-specified approximation batch primitive:

\begin{equation}
\begin{aligned}
p(\Theta \mid C_1, \ldots, C_B)
&\approx q(\Theta) 
\propto \left[\prod_{b=1}^B \mathcal{A}\left(C_b, p(\Theta)\right) p(\Theta)^{-1}\right] p(\Theta)\\
&\propto \exp \left\{\left[\xi_0+\sum_{b=1}^B\left(\xi_b-\xi_0\right)\right] \cdot T(\Theta)\right\}.
\end{aligned}
\end{equation}

$p(\Theta)$ is an exponential family distribution for $\Theta$
with sufficient statistic $T(\Theta)$ and natural parameter $\xi_0$, we have $\Theta = \beta$, $\mathcal{A} = BatchVB$, and $\xi = \lambda$. We merge models into a single model while taking into account their respective weights, which are determined based on the number of data points associated with each model. The pseudocode can be found in Algorithm \hyperlink{alg2}{1}. The time complexity is $O(n' \times K \times V)$ where $n'$ is the number of models to be merged.

\subsubsubsection{\textbf{Collapsed Gibbs Sampling.}}
Collapsed Gibbs Sampling(CGS) is also an applicable and very popular posterior approximation algorithm. Unlike variational methods that need to make unwarranted mean-field assumptions and require model-specific derivations, CGS can explore the sparse structure to asymptotically converge to the target posterior.

\begin{equation}
\begin{aligned}
p\left(z_{d i}=k \mid \boldsymbol{Z}^{-d i}, \boldsymbol{W}\right) \propto\left(N_{k d}^{-d i}+\alpha\right) \frac{N_{k v_{d i}}^{-d i}+\beta}{N_k^{-d i}+V \beta},
\end{aligned}
\end{equation}

where $N_{kd}$, $N_{kv}$ are topic-document count matrix and topic-word count matrix, which are sufficient statistics for the Dirichlet-Multinomial distribution.

We employ "weighted" Distributed Streaming Gibbs Sampling (DSGS)\cite{SGS} to achieve LDA merging. DSGS can perform CGS on different data partitions in a completely asynchronous fashion by fetching the global parameter $N_{kv}$, it can be viewed as a sequence of calls to the CGS procedure:

\begin{equation}
\begin{aligned}
\Delta \boldsymbol{N}_{k v}=C G S\left(\alpha, \beta+\boldsymbol{N}_{k v}, \boldsymbol{W}^t\right),
\end{aligned}
\end{equation}

where $\boldsymbol{W}^t$ is the document set corresponding to the $t$-th data batch. Given a unified $N_{kv}$, we regard the training process of each model as obtaining the update $\Delta N_{kv}$ based on Eq.(8). The process of model merging is to merge the updates:

\begin{equation}
\begin{aligned}
\boldsymbol{N}_{k v}=\lambda^m \boldsymbol{N}_{k v}^{t-1}+ \sum_{t=1}^{m} \lambda^{m-t}\Delta \boldsymbol{N}_{k v}^t,
\end{aligned}
\end{equation}

where $\lambda$ is the decay factor serving to weaken the posterior caused by merging and improve the performance of DSGS, each model only requires constant memory to store $\Delta N_{kv}$ and other metadata. We also consider the number of documents corresponding to each model while merging like VB. The pseudocode can be found in Algorithm \hyperlink{alg3}{2}. Since we calculate $\Delta N_{kv}$, the time complexity of model merging is the same as that of VB, which is $O(n' \times K \times V)$.

 \begin{algorithm}[H]
   \caption{Gibbs Sampling Updating}\hypertarget{alg3}{} 
   \begin{algorithmic}[1]
   \STATE \textbf{Input:} Set of LDA models $M$, $K$, $\beta_0$, decay factor $\lambda$
   \STATE \textbf{Output:} $\beta$ of output model
   \STATE initialize $N_{k v} = \beta_0 = 0$ \;
    \FOR{$m_i$ in $M$}
        \STATE ($N_i$,$\Delta N_{kv}^i$) in $\theta(m_i)$;\\
        \STATE $\beta$ = $\lambda$($\Delta N_{kv}^i$ + $N_{k v}$);
    \ENDFOR
    \STATE $
        \phi_{k v}=\frac{N_{k v}+\beta_0}{N_k+V \beta_0}
    $;\\
    \end{algorithmic}
  \end{algorithm}



It is easy to see that the above two model merging methods are model order-independent.





    \subsection{MLego: Single Query Optimization}

This section presents an efficient plan searching algorithm in MLego for reducing the plan searching costs, especially when model coverage in a query increases and plan searching becomes the main obstacle for fast data analysis.

\subsubsection{A Basic Approach and its Challenge}

One approach is to generate all candidate plans based on $M$ and $D$, followed by scoring and ranking. The main challenge of such a "generate-and-rank" method is that there are an exponential number of candidate plans, and the plan generation is time-consuming.


To solve this problem, we present Algorithm \hyperlink{alg1}{3}, which accelerates plan searching by introducing hierarchical plan generation to prune searching space. Next, we will present the details of the algorithm.

  \begin{algorithm}
   \caption{Single Query Optimization}\hypertarget{alg1}{}
   \begin{algorithmic}[1]
   \label{alg:MQP}
   \STATE \textbf{Input:} Query $q$, $\alpha$
   \STATE \textbf{Output:} model $m^*$
   
   \STATE $RL$ plans = $dfs(\Delta DAG)$\;
   \hypertarget{RLP}{}
   \IF{$\alpha$ = 1}
    \STATE $p^*$ = argmax($|M(p)|$) \;
  
  \ELSE
    \hypertarget{HPS1}{}
    \FOR{$p^*$ = threshold($L + \Delta L$)}
     \STATE Generate the next layer in each list;\\
     \STATE $\Delta L$ $\leftarrow$ $list_{l_p}$ if $\alpha$ $\neq$ 0 ;\\
     \IF{$max(|M(p_{RL})|) \leq x^*$}
     \hypertarget{opt1}{}
        \STATE $\Delta L$ $\leftarrow$ $list_{c_t(merge)+c_t(train)}$
     
     \hypertarget{opt2}{}
     \ELSE
        \STATE $\Delta L$ $\leftarrow$ $list_{c_t(merge)}$;\\
        \STATE $\Delta L$ $\leftarrow$ $RL$$(list_{c_t(train)})$;\\
     \ENDIF
  
    \ENDFOR
    \hypertarget{HPS2}{}
  \ENDIF
   \STATE Pick uncovered data in $p^*$ and train one model $m_u$; \\
   \STATE $m^*$ = Merge($m_u$, $m$ in $p^*$);\\
   \end{algorithmic}
  \end{algorithm}

\subsubsection{Cost Analysis}
Based on the plan cost in Section \hyperlink{section3.2}{4.B}, we further analyze each cost for two actions: (1) Merge and (2) Train. Merge conducts model merging, while Train handles online training for uncovered data.


\noindent\textbf{Performance Loss.} Since we train each materialized model completely, the performance loss is mainly caused by insufficient model retraining after merging.

We treat the cost model as an orthogonal issue that is often domain-specific. The only constraint that must be satisfied by the cost model is that it is monotonic. In other words, all things being equal, we assume that building a model merged by $x_i$ models should cost more than one with $x_j$ models if $x_i \geq x_j$. Our algorithm finds an optimal plan as long as the cost function is monotonic.

To verify the monotonicity assumption, based on recent research\cite{vldb_Hasani} on model merging, we conduct model merging experiments on some simple ML methods, as shown in Figure \hyperlink{fig2}{3}.  The experiment shows that the performance loss of the merged model increases as the number of merging increases compared with the model trained directly from scratch (i.e., \#models = 1). Each point in Figure \hyperlink{fig2}{3} represents the average value of all points within that interval. Taking the interval [301, 400] as an example, the above assumption is satisfied when the performance loss corresponding to the points within this interval remains unchanged or monotonically decreases.

\begin{figure}[htbp]
\centerline{\includegraphics[width=0.5\textwidth]{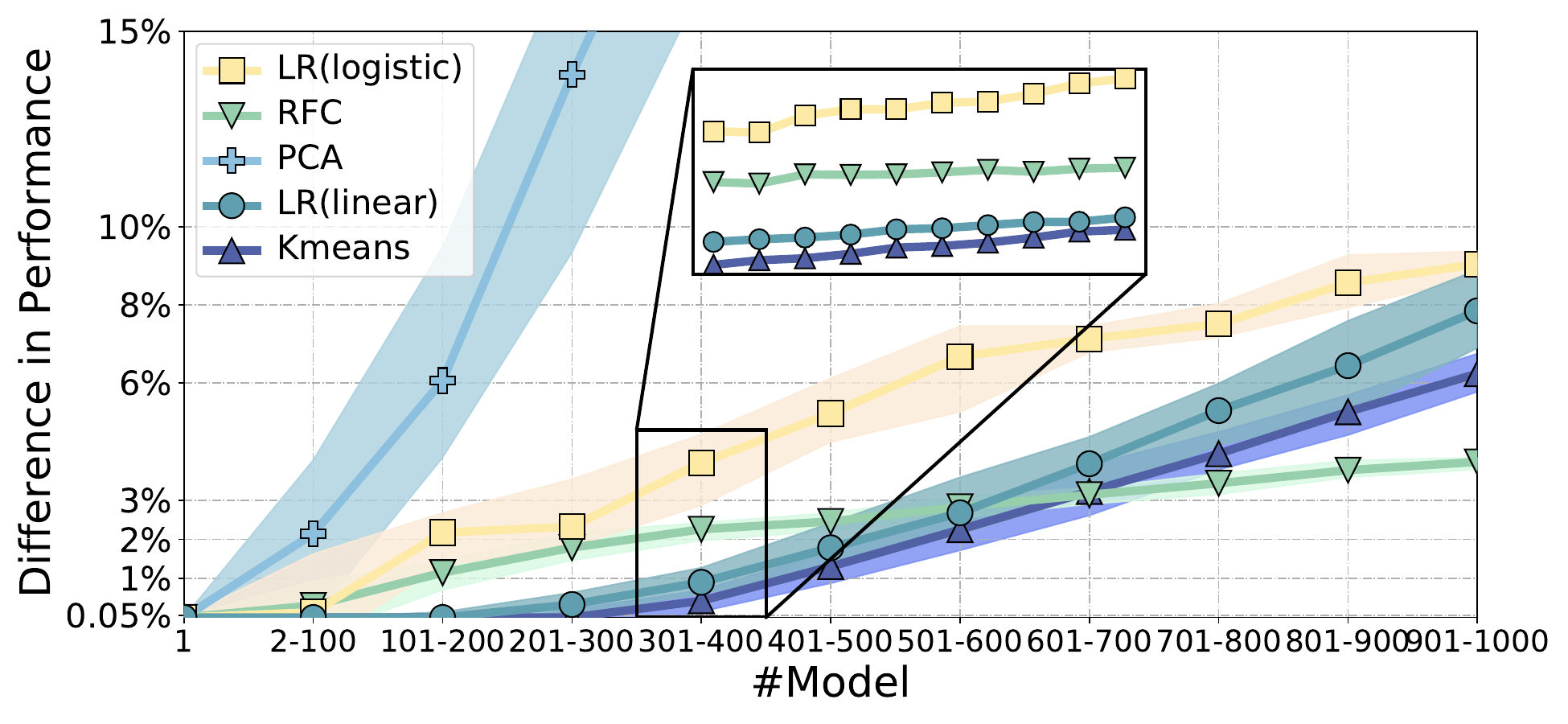}}
\caption{Verification for Monotonicity Assumption of Performance Loss}\hypertarget{fig2}{}
\label{fig_line11}
\end{figure}


Therefore, let $\mathcal{P}$(x) denote the monotone performance loss function where $x$ is the number of model merging and $\mathcal{P}(x) \in (0,1]$, i.e. $\mathcal{P}$($x_i$) $>$ $\mathcal{P}$($x_j$) iif $x_i > x_j$. As described in Section \hyperlink{section3.2}{4.B}, we denote $l_p$ as $\mathscr{A}(m) - \mathcal{P}(x)\mathscr{A}(m)$. Regardless of the full training effect of the model, that is, no matter how much $\mathscr{A}(m)$ is, we only care about the relative performance between $m$ and $m^*$, so $l_p = 1 - \mathcal{P}(x)$. When a query is covered by exactly one model, i.e. $x = 0$, $\mathcal{P}(x) = \mathcal{P}(0) = 1$, thus $l_p(Merge) = 1-\mathcal{P}(0) = 0$.





\noindent\textbf{Time Cost.} We use the algorithm complexity to represent time cost. As described in Section \hyperlink{section4}{5.A}, the complexity of training is $ O\left(M_{i}N^2 K\right)$ where $M_{i}$ is the maximum number of iterations, $N$ is the number of words and $K$ is the number of topics specified by users. The proof directly follows from \cite{lda}.

The complexity of LDA model merging is $ O\left(xKV\right)$, which is only related to the topic-word matrix and the number of merging, that is, only related to the number of topics $K$, the size of vocabulary $V$ and the number of materialized models $x$ in a plan.



\noindent\textbf{Takeaway.} The total performance loss is $1 - \mathcal{P}(x)$ - it grows to the number of merging $x$. The total time cost is $M_{i} N^2 K + xKV$ - it grows linearly to $M_{i}$, $x$, $K$ and $V$, quadratic to $N$. In general, $M_{i}$, $K$, and $V$ are set up by users. Thus, the total cost depends mostly on the number of merging and the amount of training data. Next, we will focus on reducing cost computing for speedup.


    

\subsubsection{Candidate Plans Generation}\hypertarget{section5.3}{}

We generate RL plans for the full plan list in order of training time cost for each query (line \hyperlink{RLP}{1}). For a fixed plan $p$, the training time cost is quadratic to the data uncovered by the models in $p$. As the data distribution in the query range is unknown, it is difficult to quickly estimate the training time cost. But when there is an inclusion relationship between the model sets in the two plans, we can easily compare their training time cost, i.e., for $\forall p_i$ and $p_j$, if $M(p_i) \subset M(p_j)$ where $M(p)$ is the model set of $p$, then $c_{t}(train)_{p_i} > c_{t}(train)_{p_j}$. We treat all plans with the model inclusion relationship as nodes in the same tree, and non-root nodes can be obtained by removing several models from the root nodes. We generate all relatively longest plans(RL plans) as the root nodes. That is, all other candidate plans can be generated from the RL plans, as shown in Theorem \hyperlink{theoremA2}{1}. All model non-overlapping relations in each query will be preserved in RL plans.

\begin{theorem}\hypertarget{theoremA2}{}
All possible candidate plans for each query $q$ can be generated by the RL plans.
\end{theorem}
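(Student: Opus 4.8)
The plan is to recast the generation process in the language of independent sets in a conflict graph, after which the theorem becomes the familiar statement that every independent set extends to a maximal one. First I would fix the query $q$ and restrict attention to the finite collection $M_q$ of materialized models whose training data intersect the query range. I would then build a conflict graph $G_q$ on vertex set $M_q$, placing an edge between two models exactly when their training data overlap. By the definition of a plan in the query model, a \emph{candidate plan} for $q$ is precisely a set of models that are pairwise non-overlapping, i.e.\ an independent set of $G_q$; and an \emph{RL plan} — a relatively longest plan, to which no further model can be added without violating the non-overlapping constraint — is precisely an inclusion-\emph{maximal} independent set of $G_q$. Under this dictionary, the operation ``remove several models from a root node'' is just passing from an independent set to one of its subsets.

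With this translation in place, the theorem reduces to: every independent set of $G_q$ is contained in some maximal independent set. I would prove this by a greedy extension argument. Given an arbitrary candidate plan $S$ (an independent set), if $S$ is already maximal then it is itself an RL plan and is trivially generated. Otherwise, by non-maximality there is a model $m \in M_q \setminus S$ non-adjacent to every vertex of $S$, so $S \cup \{m\}$ is again independent; I would add $m$ and repeat. Because $M_q$ is finite and the plan size strictly increases at each step, the process terminates in a maximal independent set $R \supseteq S$. Then $R$ is an RL plan and $S = R \setminus (R \setminus S)$ is obtained from the root $R$ by deleting the models in $R \setminus S$, so $S$ lies in the generation tree rooted at $R$. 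Two easy auxiliary facts close the argument: any subset of an independent set is independent (so every node reached by deletion is a genuine candidate plan, and all non-overlapping relations are preserved, exactly as the statement asserts), and the training-cost ordering $M(p_i) \subset M(p_j) \Rightarrow c_{t}(\mathrm{train})_{p_i} > c_{t}(\mathrm{train})_{p_j}$ from the surrounding text certifies that deletion moves toward longer training time, matching the intended orientation of the tree.

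I expect the only real obstacle to be definitional rather than combinatorial: I must pin down that ``relatively longest'' means set-inclusion-maximal, not merely largest in cardinality, since otherwise the extension argument need not terminate at an RL plan. The worked example in the query model — where the RL plans are $\{m_1, m_2\}$ and $\{m_3\}$ and every listed candidate plan is a subset of one of them — strongly supports the maximality reading, so I would make this interpretation explicit before running the extension. A secondary point worth flagging is that the generation is a \emph{cover}, not a partition: distinct RL plans may share descendants (for instance singletons, or the empty from-scratch plan), which is harmless for the completeness claim, since the theorem asserts only that every plan \emph{can} be generated, not that it is generated uniquely.
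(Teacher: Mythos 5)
Your proof is correct, and it takes a noticeably different (and more rigorous) route than the paper's. The paper argues by contradiction with an extremal choice: it supposes some candidate plan $p_i$ cannot be generated, picks the RL plan $p$ maximizing the number of models shared with $p_i$, and asserts that the models of $p_i$ outside $p$ would then not be ``saved'' in any RL plan, contradicting the construction of RL plans. That argument is essentially a one-line appeal to the defining property of RL plans and leaves the key step (why the leftover models contradict the construction) implicit. You instead recast plans as independent sets of a conflict graph, identify RL plans with inclusion-maximal independent sets, and give a direct greedy-extension argument: any candidate plan extends, by repeatedly adding a compatible model, to a maximal independent set, which is an RL plan containing it; deletion then recovers the original plan. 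Your version buys an explicit, constructive mechanism where the paper has only an assertion, and it correctly isolates the one definitional point the paper never states --- that ``relatively longest'' must mean set-inclusion-maximal rather than maximum-cardinality, which you justify from the worked example ($\{m_1,m_2\}$ and $\{m_3\}$ as roots). Your closing observations, that subsets of independent sets are independent (so every generated node is a genuine plan) and that the generation is a cover rather than a partition, are both accurate and harmless to the completeness claim. The only thing the paper's phrasing adds that yours does not need is the remark that ``all model non-overlapping relations in each query will be preserved in RL plans,'' which is exactly the maximality property you prove rather than assume.
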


\begin{proof}
    The proof of the Theorem 1 is obvious. We assume $p_i$ can not be generated by the RL plans, i.e., for $\forall p$ in RL plans, there is no way for $p_i$ to be obtained by removing several models in $p$. Let $p \cap p_i$ be the number of common models of $p$ and $p_i$, and we assume that $p$ is the plan that maximizes $p \cap p_i$. If $p_i$ cannot be obtained through RL plans, it means that the non-overlapping part of models between $p_i$ and $p$ has not been saved in RL plans, which conflicts with RL plans, so $p_i$ does not exist. 
\end{proof}

\subsubsection{Hierarchical Plan Searching}

We present a general plan searching algorithm based on the top-k algorithm (line \hyperlink{HPS1}{5} to \hyperlink{HPS2}{12}). Its main idea is to generate candidate plans hierarchically based on RL plans and score function, then utilize the top-k algorithm to avoid plan enumeration.

\noindent\textbf{Top-$k$ Algorithm.} As described in Section \hyperlink{section3.3}{4.C}, the score function consists of weighted $l_p$ and $c_t$. We use the threshold algorithm\cite{topk-survey,topk1} in the top-k algorithm family for optimization (line \hyperlink{HPS1}{5}). 


The threshold algorithm scans multiple ordered lists that represent different rankings of the same set of objects. The upper bound $th$ is computed and saved by applying a scoring function to partial scores of the last seen objects in different lists, where the last seen objects may be different, and the upper bound is updated every time a new object appears in one of the lists. Returns all the top k objects greater than or equal to $th$ by computing an overall score over the seen objects. 

\noindent \textbf{Example.} Consider the ranking lists given in Figure \hyperlink{fig4}{4(b)}. Since we look for the optimal plan with the minimal score, we compute the lower bound $th$ and return plans with scores lower than $th$. In the first layer $L_1$, no plan has a score lower than $th_{1_{min}}$. Then we scan plans in $L_2$. When we visit $p_2$, $sc(p_2) \leq th_{2_{min}}$, then $p_2$ returns, and we don't need to access plans in the remaining layers. 


\begin{figure}[htbp]
\centerline{\includegraphics[width=0.47\textwidth]{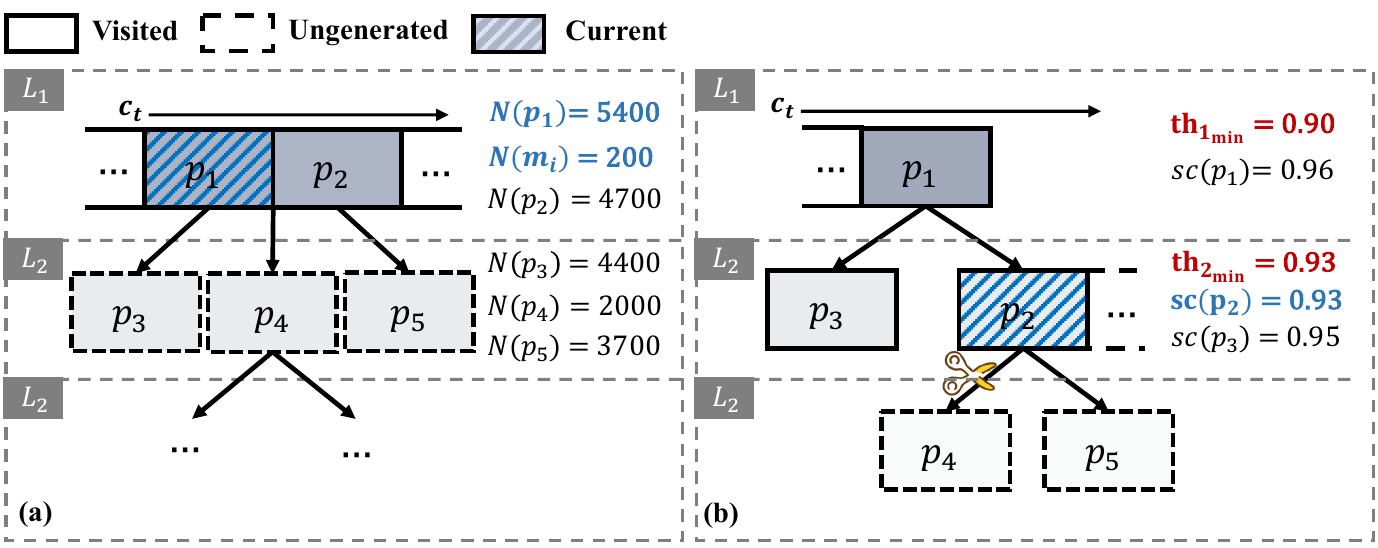}}
\caption{Single Query Optimization}\hypertarget{fig4}{}
\label{fig_psoa}
\end{figure}

Based on the threshold algorithm, we generate plans hierarchically across three lists ($l_p$, $c_t(merge)$, $c_t(train)$).

For lists of $l_p$ and $c_t(merge)$, since they are only positively correlated with $x$, we use the Breadth First Search(BFS) algorithm to generate plans hierarchically, with one more model per plan in $L_{i+1}$ than the number of models per plan in $L_i$. 


For the list of $c_t(train)$, we convert the training time cost computing based on data to be trained into that based on the number of models in the plan (line \hyperlink{HPS2}{12}). We regard RL plans as the candidate plans in $L_1$. When drilling down to the next layer $L_{i+1}$, we randomly remove a model in the plans from $L_i$ to obtain all candidate plans in $L_{i+1}$. 

The above method based on RL plans only guarantees that $c_t(train)_{p_i} \geq c_t(train)_{p_j}$ when $M(p_i) \subset M(p_j)$. To keep order within each list for the top-k algorithm, i.e., the $c_t(train)$ of all plans in $L_i$ is smaller than that in $L_{i+1}$, we sort the plan candidates generated by the $L_i$ and introduce the “push down” operation, as shown in Theorem 2. We sort plans in $L_i$ and push down plans that satisfy Theorem 2 to $L_{i+1}$, ensuring that the $c_t(train)$ list remains ordered. The essence of "push down" is to align the plan tree.


\begin{theorem}
Given $p_{i1}$ and $p_{i2}$ in $L_i$, if $N(p_{i1}) - N(min\ m\ in\ p_{i1}) \geq N(p_{i2})$, then $p_{i2}$ and plans generated based on $p_{i2}$ will be pushed down to $L_{i+1}$.
\end{theorem}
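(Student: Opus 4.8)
The plan is to reduce the claim to a single monotonicity property of the training cost and then to show that the stated coverage condition forces a layer-ordering inversion that the push-down repairs.

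First I would fix notation. Let $N_q$ denote the total amount of data falling in the query predicate of $q$, and for any plan $p$ write $N(p) = \sum_{m\in p} N(m)$ for the data covered by its models. Because the models in a plan never share training data, the data left uncovered is exactly $N_q - N(p)$, and by the cost analysis the training cost is a strictly increasing function of this uncovered amount, say $c_t(train)_p = f(N_q - N(p))$ with $f$ strictly increasing. Hence $c_t(train)_p$ is a strictly decreasing function of the coverage $N(p)$ alone, independent of which particular models realize that coverage. This is strictly stronger than the subset relation $M(p_i)\subset M(p_j)\Rightarrow c_t(train)_{p_i}>c_t(train)_{p_j}$ used earlier, and it is exactly what lets us compare plans lying in different branches of the RL-plan tree.

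Next I would locate the cheapest $L_{i+1}$ plan descended from $p_{i1}$. By construction every child in $L_{i+1}$ is obtained by deleting one model from $p_{i1}$, so the child with the largest coverage is the one from which the smallest model has been removed; its coverage is $N(p_{i1}) - \min_{m\in p_{i1}} N(m)$. By the monotonicity above, this maximal-coverage child, which I will call $p_c$, has the smallest training cost among all children of $p_{i1}$, and it legitimately resides in $L_{i+1}$. Then I would invoke the hypothesis: the assumption $N(p_{i1}) - \min_{m\in p_{i1}} N(m) \ge N(p_{i2})$ says precisely $N(p_c) \ge N(p_{i2})$, so decreasing monotonicity gives $c_t(train)_{p_c} \le c_t(train)_{p_{i2}}$. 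Thus $p_{i2}$, although currently placed in $L_i$, has training cost no smaller than a plan $p_c$ that belongs in $L_{i+1}$; leaving it in $L_i$ would break the invariant that every plan in $L_i$ precedes every plan in $L_{i+1}$ in the $c_t(train)$ list, and pushing $p_{i2}$ down to $L_{i+1}$ removes the inversion. Finally, any plan generated from $p_{i2}$ is obtained by deleting further models, so it has coverage strictly below $N(p_{i2}) \le N(p_c)$ and hence even larger training cost; the identical argument shifts the entire subtree rooted at $p_{i2}$ down by one level, which is what the statement asserts.

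The main obstacle I anticipate is the first step: rigorously arguing that $c_t(train)$ depends on the plan only through its coverage count $N(p)$. This hinges on the non-overlap of model training data (so the uncovered portion is genuinely $N_q - N(p)$) and on a uniform-unit assumption relating the document/data counts $N$ to the word count that drives the $O(M_i N^2 K)$ training cost; without it, two plans with equal coverage but different data distributions could have different uncovered word counts, and the cross-branch comparison would fail. I would also state carefully that ``pushed down to $L_{i+1}$'' means the root $p_{i2}$ moves to $L_{i+1}$ while its descendants shift down correspondingly, so that the resulting layering is globally consistent for the threshold algorithm.
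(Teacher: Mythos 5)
Your proposal is correct and follows essentially the same route as the paper's proof: both compare $p_{i2}$ against the child of $p_{i1}$ obtained by deleting its smallest model, observe that the hypothesis makes that child's coverage at least $N(p_{i2})$, and conclude that leaving $p_{i2}$ in $L_i$ would invert the layer ordering of the $c_t(train)$ list. Your version is more careful than the paper's --- in particular you make explicit the cross-branch assumption that $c_t(train)$ depends on a plan only through its total coverage $N(p)$, which the paper uses silently --- but the underlying argument is the same.
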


\begin{proof}

If $p_{i2}$ should be pushed down, we assume $p_{j}$ is generated by removing the minimum model $m$ on $p_{i1}$, then $N(p_{j}) > N(p_{i2})$, i.e., 
\begin{equation}
    \begin{aligned}
        N(p_{i1}) - N(min\ m\ in\ p_{i1}) = N(p_j) \geq N(p_{i2}).
    \end{aligned}
\end{equation}

As $p_{j}$ is in $L_{i+1}$ and $p_{i2}$ is in $L_i$, it is not guaranteed that $c_t(train)$ of all plans in $L_i$ is smaller than that in $L_{i+1}$. Therefore, maintaining the $c_t(train)$ list order is not guaranteed.

\end{proof}

\noindent \textbf{Example.} For example, in Figure \hyperlink{fig4}{4(a)}, $m_i \in M(p_1)$. Since $N({p_1}) = 5400 \ge N(p_1) - N(m_i) = 5200 \ge N(p_2) = 4700$, both $p_2$ and its child nodes ($p_4$ and $p_5$ in $L_2$) will be pushed down to $L_3$ as shown in Figure \hyperlink{fig4}{4(b)}.

\subsubsection{Improvement Computing Correlation of Costs}

The hierarchical plan searching discussed above involves multiple lists, which will weaken the searching efficiency if the generation directions of these lists are inconsistent. To further speed up plan searching, we consider reducing the number of lists (line \hyperlink{opt1}{8} to \hyperlink{opt1}{9}).

When the weight parameter $\alpha$ is 0, the score depends only on the time cost, i.e., $score = c_t(merge) + c_t(train)$, and plans in $c_t(merge)$ and $c_t(train)$ lists are generated in opposite directions. Considering the significant difference between merging and training time, we give a theoretical critical point $x^*$. If the number of models in the query plan is less than $x^*$, we can merge two lists. The intuition behind the optimization is that after we merge the $c_t(merge)$ and $c_t(train)$ lists, the merged list is in the same order as the original $c_t(merge)$ list, at least keeping the plan candidates of each layer unchanged.


\begin{theorem}\hypertarget{theorem5.2}{}
Given the maximum $|M(p)|$ among RL plans, $t_m$ is a single merging time cost, the merging time cost can be ignored without changing the order of plans in different layers of $c_t(train)$ if:

\begin{equation}
    \begin{aligned}
        |M(p)| \leq x^* =  \frac{c_t(min \, m\,in \,p_i)}{t_m}
    \end{aligned}
\end{equation}
\end{theorem}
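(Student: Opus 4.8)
The plan is to work directly with the combined time cost $c_t^{tot}(p)=c_t(train)_p+c_t(merge)_p$ and show that, under the stated bound, the merge term is too small to move any plan across a layer boundary of the $c_t(train)$ list. First I would make the merge cost explicit: by the $O(xKV)$ complexity established in the cost analysis, a plan with $x=|M(p)|$ materialized models incurs $c_t(merge)_p=|M(p)|\cdot t_m$, so that $c_t^{tot}(p)=c_t(train)_p+|M(p)|\cdot t_m$. This isolates the merge cost as an additive perturbation whose size is governed solely by $|M(p)|$.

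Next I would quantify the granularity of the $c_t(train)$ layering. Descending one layer removes a single model and forces its previously covered data to be retrained, so the smallest possible increase in training cost is obtained by removing the smallest model and equals $c_t(\min m\text{ in }p_i)$. Consequently the training cost separating two adjacent layers is at least $c_t(\min m\text{ in }p_i)$, and this is the finest resolution that the merge perturbation is allowed to disturb.

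I would then compare a shallow plan (many models, hence large merge cost) against a plan one layer deeper (fewer models, smaller merge cost); this is the adversarial case, since adding the merge cost inflates shallow plans while deflating deep ones relative to pure training cost. Writing $X$ for the maximum $|M(p)|$ among the RL plans, the cross-layer order by $c_t^{tot}$ coincides with the order by $c_t(train)$ as soon as the largest achievable merge cost cannot erase the inter-layer gap, i.e.

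\begin{equation}
    X\cdot t_m \le c_t(\min m\text{ in }p_i).
\end{equation}

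Solving for $X$ gives exactly the critical point $x^{*}=c_t(\min m\text{ in }p_i)/t_m$, so whenever $|M(p)|\le x^{*}$ the merge cost may be dropped and the combined list retains the $c_t(train)$ layering, keeping the plan candidates of each layer unchanged.

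The step I expect to be the main obstacle is the interaction with the ``push down'' of Theorem 2: once plans have been pushed down, a single layer may contain plans of quite different model counts, so the worst-case merge-cost difference between a plan in $L_i$ and one in $L_{i+1}$ is controlled by $X$ rather than by a one-model difference. I must therefore verify that the inter-layer training gap still dominates $X\cdot t_m$ after re-sorting, i.e. that this gap does not shrink below $c_t(\min m\text{ in }p_i)$ for any pushed-down pair. Establishing this is the delicate part of the argument, and it is precisely what forces the bound to be phrased in terms of the maximum $|M(p)|$ over the RL plans.
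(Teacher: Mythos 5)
Your argument is essentially the paper's own: both isolate the merge cost as an additive perturbation of at most $X\cdot t_m$ (with $X$ the maximum $|M(p)|$ over RL plans), lower-bound the training-cost increase incurred by descending a layer by $c_t(\min m\text{ in }p_i)$, and obtain $x^*$ by requiring the former not to exceed the latter. The complication you flag at the end --- that after ``push down'' a cross-layer pair need not be in an ancestor--descendant relation, so the inter-layer training gap is not obviously bounded below by $c_t(\min m\text{ in }p_i)$ --- is a real loose end, but the paper's proof does not resolve it either: it only ever compares a plan $p_i$ with plans $p_{ij}$ obtained by filtering models out of $p_i$, so your proposal is no less complete than the published argument.
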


\begin{proof}
For $\forall p_i \in set(p)$, $p_{ij}$ is generated through model filtering on $p_i$, i.e., $M(p_{ij}) \subset M(p_i)$. If 
\begin{equation}
    \begin{aligned}
        c_{t}(train)_{p_i} + c_{t}(merge)_{p_i} < c_{t}(train)_{p_{ij}} + c_{t}(merge)_{p_{ij}}
    \end{aligned}
\end{equation}
then,
\begin{equation}
    \begin{aligned}
        c_{t}(merge)_{p_i} - c_{t}(merge)_{p_{ij}} < c_{t}(train)_{p_{ij}} - c_{t}(train)_{p_{ij}}
    \end{aligned}
\end{equation}
The difference in the number of models between $p_i$ and $p_j$ is $\Delta m$, then 
\begin{equation}
    \begin{aligned}
        \Delta m * c_t(merge) < c_{t}(train)_{\Delta N(p_i - p_{ij})}
    \end{aligned}
\end{equation}
where $\Delta N(M(p_i) - M(p_{ij}))$ is the training data difference between two plans $p_i$ and $p_{ij}$, i.e., the training data corresponding to the model set that differs between $p_i$ and $p_{ij}$. Since $p_{ij} \subset p_i$, then
\begin{equation}
    \begin{aligned}
        \Delta m * c_t(merge) &< \Delta m * c_{t}(train)_{N(m_{min} in p_i)} \\
        &< c_{t}(train)_{\Delta N(p_i - p_{ij})}
    \end{aligned}
\end{equation}
Let $x^*$ be the maximum number of model merging and $c_t(min m in p_i)$ be the train time cost of the minimum model in plan $p$ such that
\begin{equation}
    \begin{aligned}
    x^* = \frac{ c_t(min\, m \,in \,p_i)}{t_m}
    \end{aligned}
\end{equation}
For $\forall p_i$ in this query, if the number of models of $p_i$ is less than $x^*$, then the merging cost can be ignored.

\end{proof}

Furthermore, we give a tighter upper bound $x^*$ in Theorem \hyperlink{theorem4}{4} so that ignoring the merge time cost will keep the order of plan candidates in each layer unchanged.

\begin{theorem}\hypertarget{theorem4}{}
Given the number of models $|M_p|$ included in the RL plan, $m_{min}$ is the model with the least training data in the model set. The merging time cost can be ignored if the time cost of a single merge $t_m$ meets:
\begin{equation}
\begin{aligned}
|M_p| \leq \frac{m_{min}}{t_m}
\end{aligned}
\end{equation}

\end{theorem}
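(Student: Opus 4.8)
The plan is to extend the argument of Theorem~\hyperlink{theorem5.2}{3} from preserving the order of plans \emph{between} layers to preserving the membership of plans \emph{within} each layer of the $c_t(train)$ list, and to sharpen the threshold by measuring the binding training-cost gap with the smallest model's data count $m_{min}$ rather than with its full (quadratic) training cost $c_t(min\,m\,in\,p_i)$.

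First I would reuse the per-pair condition established in the proof of Theorem~\hyperlink{theorem5.2}{3}: for a plan $p_i$ and any plan $p_{ij}$ obtained from it by model filtering ($M(p_{ij}) \subset M(p_i)$), dropping the merge term leaves their relative order intact precisely when $\Delta m \cdot t_m < c_t(train)_{\Delta N}$, where $\Delta m = |M(p_i)| - |M(p_{ij})|$ and $\Delta N$ is the training data carried by the removed models. Accumulated over a plan, the merge cost that could perturb the ranking is $|M_p|\cdot t_m$, so the task reduces to bounding this quantity by the smallest training-cost gap that the generation procedure can produce.

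Next I would locate that binding gap under the layer structure created by the push-down operation of Theorem~2. Because push-down may place plans of differing model counts into the same layer, the order that must survive is no longer only the coarse layer-to-layer order but the finest resolution between consecutively ranked plans; this resolution is governed by the single smallest model $m_{min}$, since removing or exchanging it perturbs the uncovered training data by the least amount. Taking the first-order (linear) increment $m_{min}$ in place of the aggregate quadratic cost used in Theorem~\hyperlink{theorem5.2}{3}, the requirement that the merge cost never close this gap becomes $|M_p|\cdot t_m \leq m_{min}$, and solving for $|M_p|$ yields the claimed threshold $\frac{m_{min}}{t_m}$.

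The hard part will be this middle step: justifying the replacement of the quadratic training cost by the linear quantity $m_{min}$ and proving it is still sufficient. Concretely, I must show by a case analysis over how push-down interleaves plans of different cardinalities that the smallest training-cost difference any ignored merge term could overturn is lower-bounded by $m_{min}$ rather than by $c_t(min\,m\,in\,p_i)$, i.e., that the convexity slack exploited in Theorem~\hyperlink{theorem5.2}{3} is exactly what is consumed in refining from a between-layer to a within-layer guarantee. Invoking the monotonicity of the cost model assumed throughout the cost analysis then closes the argument in each case.
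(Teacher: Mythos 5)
Your overall route matches the paper's: both arguments reduce the claim to showing that the cumulative merge cost $|M_p|\cdot t_m$ stays below the smallest training-cost gap separating two plans whose relative order must survive, and both locate that gap at the smallest model. But two things diverge from, and are not repaired by, your plan. First, the quantity in the denominator. The tightening from Theorem \hyperlink{theorem5.2}{3} to Theorem \hyperlink{theorem4}{4} is not a replacement of the quadratic training cost by the linear data count $m_{min}$; the paper's proof still works with the \emph{training cost} of the smallest model (it sets $\Delta c_{t}^* \leq c_t(train)_{m_{min}}$ and $x^* = \Delta c_{t}^*/t_m$), and the sharpening comes from passing from the minimum model \emph{within the plan} $p_i$ (Theorem \hyperlink{theorem5.2}{3}) to the minimum model over the \emph{whole model set} (Theorem \hyperlink{theorem4}{4}), which is what is needed once arbitrary same-layer pairs, not just ancestor--descendant pairs, must keep their order. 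Your ``linear increment'' reading also makes the ratio $m_{min}/t_m$ dimensionally a data count divided by a time, which cannot be compared to a model count without reintroducing the cost function.

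Second, and more importantly, the step you defer to a case analysis is the entire content of the theorem, and it does not follow from monotonicity. For two \emph{incomparable} plans in the same layer (neither model set contains the other), the difference in uncovered training data is a difference of two unrelated unions and can be arbitrarily small; it is not bounded below by the data of any single model, let alone by $m_{min}$. So the claim that ``the smallest training-cost difference any ignored merge term could overturn is lower-bounded by $m_{min}$'' is exactly what needs proof and is false in general without further assumptions on the model set. (To be fair, the paper's own proof shares this soft spot: it only establishes $\Delta c_{t}^* \leq c_t(train)_{m_{min}}$, an upper bound on the minimal gap where a lower bound is required.) As written, your proposal reproduces the paper's argument, including its weakness, while explicitly leaving the one nontrivial step unproven.
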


\begin{proof}
    For $\forall p_i, p_j \in set(p)$. we assume that $c_t(train)_{p_i} < c_t(train)_{p_j}$. Let 
    \begin{equation}
    \begin{aligned}
        \Delta c_{t}^* &= min(c_t(train)_{p_i} - c_t(train)_{p_j}) \\
        &\leq c_t(train)_{m_{min}}
    \end{aligned}
    \end{equation}
    then 
    \begin{equation}
    \begin{aligned}
    c_t(train)_{p_i} + \Delta c_{t}^* < c_t(train)_{p_j}
    \end{aligned}
    \end{equation}
    then for $\forall p$, if $c_t(merge)_{p} < c_{t}^*$, then $c_t(merge)_{p}$ can be ignored, i.e., let $x^*$ be the maximum number of model merging such that $c_t(merge)_{p} = c_{t}^*$, i.e., 
    \begin{equation}
    \begin{aligned}
    x^* = \frac{\Delta c_{t}^*}{t_m}
    \end{aligned}
    \end{equation}
    For $\forall p \in$ RT plans, if the number of models of $p$ is less than $x^*$, then the merging cost can be ignored.
\end{proof}

    \subsection{MLego: Batch Query Optimization}

In this section, we study how to reorder query plans to minimize the total cost of batch queries in MLego. We first prove this problem is NP-hard and then present a heuristic algorithm \ref{alg:BQP}.


\subsubsection{Problem Analysis}


Given a batch of queries $Q$ = \{$q_1, q_2, ..., q_b$\}, each query has its corresponding weight parameter $\alpha$ and $\alpha$ is 0, the total return time for this batch is $T$. Given the current plan combination $P$ composed of the current plan of each query $q_i$, the execution time of each plan is $t_i$, we can reuse the overlapping ranges between query plans multiple times by training only once so that $\sum t_i \geq T$. Find the optimal plan combination $P^*$ that minimizes $T$.


As shown in Figure \hyperlink{fig5}{5(a)}, given a batch of queries $Q$ = \{$q_1, q_2, q_3$\}, we generate plans in $L_1$ for them based on Section 5. The current plan combination $P$ consists of the optimal plans $p^{q_1}_1$, $p^{q_2}_1$(\{$m_1$, $m_2$, $m_4$\}) and $p^{q_{3}}_1$(\{$m_1$, $m_2$, $m_3$\}) corresponding to each query. In the single query processing workflow, we need to train data uncovered by models and then return the results. When considering both $q_1$, $q_2$ and $q_3$ simultaneously, we can train the ranges $\Delta r_1$, $\Delta r_2$ and $\Delta r_3$ once, which can be used for the model merging of all queries, resulting in a time-saving of $c_t(\Delta r_1) + c_t(\Delta r_2) + c_t(\Delta r_3)$. In addition, although the training time for $p^{q_{3}}_2$ is longer than that for $p^{q_{3}}_1$, if the overlap of data between $p^{q_{3}}_2$ and $p^{q_{i}}_1$ leads to greater time savings, we consider \{$p^{q_{3}}_2$, $p^{q_{2}}_1$, $p^{q_{1}}_1$\} to be optimal to \{$p^{q_{3}}_1$, $p^{q_{2}}_1$, $p^{q_{1}}_1$\}. Finding an optimal plan combination $P^*$ is NP-hard, as shown in Theorem \hyperlink{theoremA4}{5}.

\begin{theorem}\hypertarget{theoremA4}{}
 The batch query optimization, which checks for every plan of each query to maximize the time savings, is NP-hard. 

\end{theorem}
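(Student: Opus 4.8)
The plan is to prove NP-hardness by a reduction from a known NP-hard problem, most naturally the \emph{Maximum Coverage} problem or the \emph{Weighted Set Cover} problem, since the batch query optimization is fundamentally about selecting a combination of plans whose overlapping training ranges cover data so as to maximize reuse (equivalently, minimize total training time). First I would formalize the optimization as a decision problem: given the batch $Q = \{q_1,\ldots,q_b\}$, the candidate plan lists for each query, the per-model training data ranges, and a target value $\tau$, decide whether there exists a plan combination $P$ achieving total return time $T \leq \tau$. Establishing membership in NP is routine, since a proposed plan combination $P$ can be verified in polynomial time by computing the union of training ranges and evaluating $T$; the substance is the hardness direction.

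For the reduction, I would map an instance of Maximum Coverage (a universe of elements, a family of subsets, and a budget) onto an instance of batch query optimization. The key conceptual correspondence is that choosing which plan $p^{q_i}$ to assign to each query $q_i$ is analogous to choosing subsets, and the time saved by training an overlapping data range only once corresponds to the "coverage" of shared elements. Concretely, I would encode each element of the universe as a distinct unit of uncovered training data (a $\Delta r$ range), and engineer the candidate plans so that selecting a particular plan for a query corresponds to selecting a particular subset; the decay into total training time $T$ must be arranged so that minimizing $T$ is equivalent to maximizing the total overlap, i.e., maximizing coverage. The monotonic cost structure established earlier (training cost quadratic in uncovered data, so training a shared range once yields a saving of $c_t(\Delta r)$) is exactly the lever that makes the coverage objective appear in the cost function.

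The main obstacle I anticipate is the gadget construction that faithfully forces the combinatorial choice. In the stated problem all $\alpha = 0$, so the objective is purely the total training time $T$; I must build the model sets and query predicates so that (i) each query's plan choice is a genuine binary/discrete decision rather than something trivially optimized independently, and (ii) the interaction between plans — the shared $\Delta r$ ranges — precisely realizes the set-intersection or coverage structure of the source instance. The subtlety flagged in the paper's own Figure~5 example, that a locally longer-training plan $p^{q_3}_2$ may be globally preferable because its overlap with other queries' plans saves more time, is exactly the source of NP-hardness: the choices are coupled, and greedily optimizing each query in isolation fails. I would therefore design gadgets in which the only way to reach the target $\tau$ is to select a combination of plans whose overlaps cover a prescribed amount of data, so that a feasible solution to the query instance yields a valid cover and vice versa. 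Verifying that the reduction is polynomial-time and that the equivalence holds in both directions will complete the argument.

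A cleaner alternative, which I would consider if the coverage encoding proves awkward, is a reduction from a problem whose structure matches the "choose one plan per query, optimize shared overlap" shape even more directly — for instance an appropriate variant of \emph{3-Dimensional Matching} or a scheduling/assignment problem with shared-resource savings. The trade-off is that Maximum Coverage gives the most transparent intuition (reuse $=$ coverage), whereas a matching-based reduction may make the "one choice per query" constraint more natural to enforce; I would pick whichever yields the simplest gadget while keeping the overlap-saving objective intact.
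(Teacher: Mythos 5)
Your proposal takes essentially the same route as the paper: a reduction from Maximum Coverage in which each query's plan choice plays the role of selecting a subset, shared training ranges play the role of covered elements, and maximizing time savings corresponds to maximizing coverage. The paper's own proof is in fact sketchier than your plan --- it states the correspondence (models as set elements, one partition per query, coverage replaced by the time-savings function) without constructing the gadgets or verifying both directions of the equivalence, so the additional care you flag would only strengthen the argument.
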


\begin{proof}

We demonstrate the problem is NP-hard by reducing maximum coverage, which is a standard NP-complete problem, to the batch query optimization problem.

The maximum coverage problem is defined as, given a collection of sets $S = {S_1, S_2, ..., S_m}$ and a number $k$, is there a subset $S'$ of $S$, such that $\left|S^{\prime}\right| \leq k$ and number of covered elements $\left|\bigcup_{S_i \in S^{\prime}} S_i\right|$ is maximized.

The maximum coverage problem can be reduced to batch query optimization as follows. We treat each object $m$ in $S$ as a model $m$ in a query plan $p$. Then

\begin{itemize}
\item we set $k$ as the number of batch queries $Q$.
\item we partition the collection of sets into $|Q|$ parts, where each part corresponds to a query $q$, and we pick up $S'$ from $|Q|$ partitions respectively.
\item we replace the covered elements finding function $\left|\bigcup_{S_i \in S^{\prime}} S_i\right|$ with the time savings function, which takes batch query plans as input and outputs training time savings from batch query common range/region model reuse.
\end{itemize}

The Batch query optimization problem tells us if there is a set of query plans $P$ such that the total time savings is maximum. The answer to the maximum coverage problem is yes, if and only if the answer to the batch query optimization problem is yes.
\end{proof}








    

\subsubsection{Batch Query Optimization Algorithm}
We provide a heuristic algorithm based on Theorem \hyperlink{theoremA6}{6}. The key idea is to balance the benefit change and the training time cost change plan by plan for each query within the fixed query order in the batch. First, we define \textbf{Benefit} as follows.

  \begin{algorithm}[H]
   \caption{Batch Query Optimization Algorithm}
   \begin{algorithmic}[1]
   \label{alg:BQP}
    \STATE Filter $dag_i$ for $q_i$ in $Q_b$\;

    \FOR{$q_i$ in $Q_b$}
    
        \STATE $P^{q_i}_{L_1}$, $M_i$ = max(DFS($dag_i$))\;
        \FOR{$m_i$ in $M_i$}
        
            \STATE $\Delta \mathcal{B}_{m_i}$ = $c_{t}(\Delta(m_i, P_{-q_i}))$-$c_{t}(m_i)$
        
        \ENDFOR
        \FOR{$p_i$ in $P^{q_i}_{L_1}$}
        
            \FOR{$m_i$ in $M_i$}
            
                \STATE $p^*_i$ = $p_i$ - $m_i$ if $\Delta \mathcal{B}_{m_i} > 0$
            \ENDFOR
            \STATE $sc(p^*_i) = \sum \mathcal{B}_{m_i}$ for $m_i$ in $p^*_i$
            \STATE $sc(p^*_i) -= \Delta t(p_1, p_i)$
        \ENDFOR
        \STATE $P$ $\leftarrow$ $max(sc(p^*_i))$
    \ENDFOR
    \STATE \textbf{Return} $P$
    \end{algorithmic}
  \end{algorithm}

\begin{figure}[htbp]
\centerline{\includegraphics[width=0.5\textwidth]{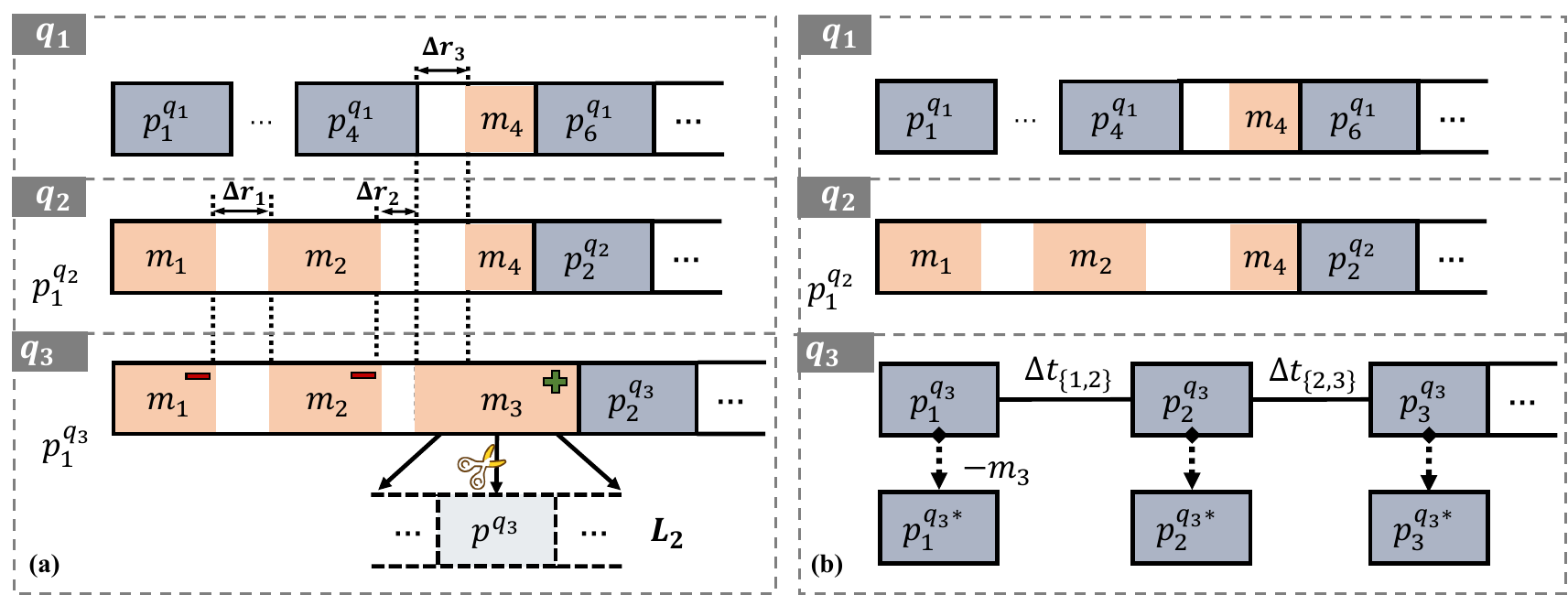}}
\caption{Batch Query Optimization}\hypertarget{fig5}{}
\label{fig_bqo}
\end{figure}

\begin{definition}
Given a batch of query $Q_b = \{q_1, q_2, ..., q_n\}$ and its current plan combination $P$, $\Delta(P) = \{\Delta r_1, \Delta r_2, ... , \Delta r_m\}$ represents the overlapping ranges of $P$, and the cost of model training on $\Delta(P)$ is $c_t(\Delta(P)) = \sum_{i=1}^{m} c_t(\Delta r_i)$, and the benefit $\mathcal{B}(P)$ is:

\begin{displaymath}
\sum_{i=1}^{m}(|\Delta(r_i)|-1)c_t(\Delta(r_i)),
\end{displaymath}
\end{definition}

where $|\Delta(r_i)|$ is the number of plans for $\Delta(r_i)$. For example, in Figure 5(a), the benefit of $P = \{p^{q_{3}}_1, p^{q_{2}}_1, p^{q_{1}}_1\}$ is $\mathcal{B}(P) = c_t(\Delta r_1) + c_t(\Delta r_1) + c_t(\Delta r_3)$.

We sequentially select plans from $q_1$ to $q_n$ based on the benefit, aiming to maximize the benefit of the current query plan combination $P$. Since our initial plan combination consists of the top-1 plan from each query, i.e., \{$p^{q_1}_1$, $p^{q_2}_1$, ..., $p^{q_n}_1$\}, while considering the maximum benefit, we also need to calculate the training time difference $\Delta t_{\{1,j\}}$ between the current plan $p^{q_i}_j$ and $p^{q_i}_1$. This refers to the time cost difference caused by the difference in the data to be trained for $p^{q_i}_1$ and $p^{q_i}_j$.


        
    


Based on Section 5.B, an approach is to generate candidate plans by layers and terminate the iteration based on Theorem \hyperlink{theoremA6}{6}. However, this method requires computing the benefit $\mathcal{B}(P)$ for each plan combination, which can be computationally costly. 

To solve the above problem, we compute the benefit of each model within query $q_i$ under the current plan combination of other queries $P^{-q_i}$. This involves computing the benefit of removing $m$ from the range of $q_i$, along with the training time cost of $m$. Let $\mathcal{B}(\{m, P^{-q_i}\})$ represent the benefit of $m$, where \{$m, P^{-q_i}$\} is the plan combination and $q_i$ has been replaced with one query with a range corresponding to $m$ but without including any models.

\begin{lemma}
    For $\forall p_x$ in $L_x$, $\exists P = \{p_1, p_2, ..., p_{x-j+1}\}$ in $L_j$, $p_x$ can be generated by removing $|x-j|$ models from $p$, where $p \in P$.
\end{lemma}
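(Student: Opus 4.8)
The plan is to induct on the layer gap $d = x - j$, leaning on the single-model-removal rule that drives the layer-by-layer candidate generation for the $c_t(train)$ list, together with the completeness guarantee of Theorem 1 that every candidate plan descends from an RL plan. Since removal only decreases the model count and $L_1$ holds the longest plans, I may take $x \ge j$ so that $|x-j| = x-j$. Before the induction I would pin down the generation semantics as an invariant: $L_1$ is exactly the set of RL (root) plans, and every plan appearing in $L_{i+1}$ is obtained from some plan in $L_i$ by deleting a single model, so consecutive layers are linked by ``remove-one-model'' edges under which model-set inclusion is preserved. I would also record that the push-down operation of Theorem 2 relocates a plan together with its entire generated subtree, hence it never severs these parent--child edges; it only shifts the layer index at which a node is read off. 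Consequently the reachability structure I reason about is a forest of generation paths rooted at the RL plans.

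Next I would run the induction. For the base case $d = 0$ we have $p_x \in L_x = L_j$, so $P = \{p_x\}$ with $|P| = x-j+1 = 1$ witnesses the claim after removing zero models. For the inductive step, assume the statement for gap $d-1$, giving a chain $\{p_{j+1}, \ldots, p_x\}$ of $x-j$ plans that realizes $p_x$ from a plan $p_{j+1} \in L_{j+1}$ by deleting $d-1$ models. By the invariant, $p_{j+1}$ has a generating parent $p_j \in L_j$ with $M(p_{j+1}) \subset M(p_j)$ and $|M(p_j)| = |M(p_{j+1})| + 1$. Prepending $p_j$ produces a set $P = \{p_j, p_{j+1}, \ldots, p_x\}$ of exactly $x-j+1$ plans, and deleting the single model $M(p_j) \setminus M(p_{j+1})$ followed by the $d-1$ deletions supplied by the hypothesis turns $p_j$ into $p_x$; that is, $p_x$ is generated from $p_j \in L_j$ by removing exactly $|x-j|$ models, which closes the induction.

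The step I expect to be the main obstacle is guaranteeing that the parent $p_j$ built in the inductive step truly resides in $L_j$ (rather than having been pushed down to a deeper layer) and that every intermediate model set along the chain is a legal, non-overlapping candidate plan. I would discharge both points via Theorem 1: since any set obtained by removing models from an RL plan is itself a valid candidate plan, each intermediate node is well defined, and the chain I construct is precisely a root-to-$p_x$ path in the push-down-realigned generation tree. Counting the nodes on this path yields the cardinality $x-j+1$, while the telescoping of single-model removals along it yields the total of $|x-j|$ removals, matching the statement. The subtlety worth stating explicitly is that ``$P$ in $L_j$'' should be read as the generation chain \emph{rooted} at $L_j$, with the model-removal claim applying to its $L_j$-endpoint $p_j \in P$.
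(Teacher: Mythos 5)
There is a genuine gap: you prove a weaker statement than the one the paper needs. Your induction builds a single generation chain $p_j, p_{j+1}, \ldots, p_x$ with exactly one plan per layer, and the set $P$ you exhibit therefore contains only \emph{one} plan that actually lies in $L_j$; the other $x-j$ members live in $L_{j+1},\ldots,L_x$. You notice this mismatch yourself and resolve it by reinterpreting ``$P$ in $L_j$'' as a chain merely \emph{rooted} at $L_j$. But that reading is inconsistent with how the lemma is used in Theorem 6, where the proof explicitly invokes ``$|M|+1-j$ plans $\{p_{j1}, p_{j2}, \ldots\}$ in $L_j$ from which $p_x$ is generated'' and then divides the threshold by $|M|+1-j$ — the whole point of the lemma is that $p_x$ has \emph{many distinct ancestors inside the single layer $L_j$}, so that bounding each of them bounds their sum. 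A one-per-layer chain cannot supply that multiplicity, so your argument does not support the downstream theorem.

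The paper's proof gets the multiplicity by a counting argument rather than a chain: by Theorem 1, $p_x$ descends from some RL plan $p_1 \in L_1$ by removing a set $M_{x-1}$ of $x-1$ models; every choice of $j-1$ models from $M_{x-1}$ to remove first yields a distinct plan in $L_j$ whose model set still contains $M(p_x)$, hence each such plan reaches $p_x$ by removing the remaining $x-j$ models. This gives $\binom{x-1}{j-1}$ distinct $L_j$-ancestors, and the arithmetic $\binom{x-1}{j-1} \geq x-1 > x-j$ supplies the required cardinality. To repair your proof you would need to replace the chain with this branching count (or strengthen your inductive hypothesis to track the \emph{number} of $L_j$-ancestors, e.g.\ via Pascal's rule), not merely the existence of one.
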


\begin{proof}
    According to Theorem \hyperlink{theoremA2}{1}, for $\forall p_x$ in $L_x$, $\exists p_1$ in $L_1$, $p_x$ is generated by removing ($x-1$) models. Let $M_{x-1}$ be these $x-1$ models, as $\exists p_j$ in $L_j$ where $x > j > 1$, $p_j$ can be generated by removing ($j-1$) models from $p_1$, if these ($j-1$) models all belong to $M_{x-1}$, then $p_x$ can be obtained by removing ($x-j$) models belonging to$M_{x-1}$ from $p_j$.

    For the above plan $p_j$, $\tbinom{|M_{x-1}|}{j-1}$ distinct plan $p_j$ can be obtained through $p_1$. We only need to prove $\tbinom{|M_{x-1}|}{j-1} > x-j$.

    $\tbinom{|M_{x-1}|}{j-1}$ can be expressed as 
    
    \begin{equation}
        \begin{aligned}
        \tbinom{|M_{x-1}|}{j-1} &= \tbinom{x-1}{j-1} \\
        &= \frac{(x-1)!}{(j-1)!\times(x-j)!}\\
        &= \frac{(x-1)!}{(j-1)\times(j-2)\times...\times2\times(x-j)!},
        \end{aligned}
    \end{equation}

    since 

    \begin{equation}
        \begin{aligned}
        x-1 &= \frac{(x-1)!}{(x-2)!} \\
        &= \frac{(x-1)!}{(x-2)\times...\times(x-j+1)\times(j-1)!},
        \end{aligned}
    \end{equation}
    
    and $(x - 2 \geq j-1$, then  
    \begin{equation}
        \begin{aligned}
        (x-2)\times...\times(x-j+1) \geq (j-1)\times(j-2)\times...\times2,
        \end{aligned}
    \end{equation}
    so $\tbinom{|M_{x-1}|}{j-1} \geq x-1 > x-j$. 
    
\end{proof}

\begin{theorem} \hypertarget{theoremA6}{}

For $q$, let $p_i$ be the current largest plan of $ \mathcal{B}(p_i) - \Delta t_{\{1,i\}}$ in $L_i$. $p_i$ is the plan corresponding to $q$ in $P^*$, if for $\forall p_j$ in $L_j$, it satisfies 
\begin{equation}
    \begin{aligned}
        \mathcal{B}(p_j) - \Delta t_{\{1,j\}} < \frac{1}{|M|+1-j}(\mathcal{B}(p_i) - \Delta t_{\{1,i\}}),
    \end{aligned}
\end{equation}
where $j > i$, $|M|$ is the largest number of models in the plan.


\end{theorem}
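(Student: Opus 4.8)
The plan is to read Theorem~\hyperlink{theoremA6}{6} as an early-termination (threshold-style) guarantee for the layer-by-layer batch search of Algorithm~\ref{alg:BQP}: writing $g(p) = \mathcal{B}(p) - \Delta t_{\{1,\ell\}}$ for the objective of a plan $p$ in layer $L_\ell$, and setting $S_i = g(p_i)$ for the current best found at layer $L_i$, I want to show that the stated inequality forces $g(p_x) < S_i$ for every plan $p_x$ lying in a layer $L_x$ deeper than $L_i$. Once this holds, no later plan can overtake $p_i$, so the search may stop and $p_i$ is the component of $P^*$ belonging to $q$.

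First I would record the two monotonicity facts implicit in the cost model. Descending one layer deletes a model and enlarges the uncovered data that must be trained, so $\Delta t_{\{1,x\}}$ is nondecreasing in $x$ and in particular $\Delta t_{\{1,x\}} \ge \Delta t_{\{1,j\}}$ whenever $x > j$. At the same time a larger uncovered region can only create more overlap with the fixed plans $P^{-q}$ of the other queries, so the benefit $\mathcal{B}$ is nondecreasing under model removal. These two observations expose why termination is subtle: a deeper plan pays strictly more training time but may also accrue strictly more reuse benefit, and it is the interplay of the two that the normalizing factor $\tfrac{1}{|M|+1-j}$ is meant to control.

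Next I would invoke Lemma~1 to tie a deep plan back to the tested layer. For $p_x \in L_x$ it supplies a family $P = \{p_1,\dots,p_{x-j+1}\} \subseteq L_j$ from whose members $p_x$ is obtained by deleting $|x-j|$ models, so $M(p_x) \subseteq M(p)$ for every $p \in P$; and since layer-$j$ plans carry $|M|+1-j$ models with $x \le |M|$, the family obeys $|P| = x-j+1 \le |M|+1-j$. The intended chain is to bound the objective of $p_x$ by those of its generators and then apply the hypothesis termwise:
\begin{equation}
g(p_x) \;\le\; \sum_{p \in P} g(p) \;<\; |P|\cdot\frac{S_i}{|M|+1-j} \;\le\; S_i,
\end{equation}
where the last step uses $|P| \le |M|+1-j$ and $S_i > 0$. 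Reading $\tfrac{1}{|M|+1-j}$ as the reciprocal of the maximum number of layer-$j$ generators that a deepest plan can have is exactly what makes the calibration collapse the sum back to $S_i$.

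The hard part will be the first inequality $g(p_x) \le \sum_{p \in P} g(p)$, and specifically the bookkeeping of the training terms inside it. Because each model's contribution to $\mathcal{B}$ is defined through training ranges shared with $P^{-q}$, I would first argue that the uncovered regions of the generators jointly cover that of $p_x$ — which is precisely what the construction of $P$ in Lemma~1 secures — so that $\mathcal{B}(p_x) \le \sum_{p\in P}\mathcal{B}(p)$ with no overlap range double counted. The delicate point is that expanding $\sum_{p\in P} g(p)$ reintroduces $|P|$ copies of the layer-$j$ training offset while $p_x$ carries only one, leaving a surplus of $(|P|-1)\Delta t_{\{1,j\}}$; controlling this surplus, either by passing to a single dominating generator or by telescoping the descent one layer at a time and charging each removed model its increment $\Delta\mathcal{B}_m$, is where the genuine work lies. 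The monotonicity $\Delta t_{\{1,x\}} \ge \Delta t_{\{1,j\}}$ from the first step removes only the easy part of this accounting.
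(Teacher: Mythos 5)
Your skeleton coincides with the paper's own proof: it too invokes Lemma~1 to exhibit a family of layer-$j$ generators of a deep plan $p_x$, bounds $\mathcal{B}(p_x,P^{-q})$ by the sum of the generators' benefits, applies the hypothesis to each generator, and closes by counting generators against the normalizer $|M|+1-j$ (the paper first runs the two-generator case $x=j+1$, then the general one). The difference is that you stop at exactly the step the paper uses to finish. You correctly observe that after writing $g(p_x)\le\sum_{p\in P}g(p)$ one is left with a surplus of $(|P|-1)$ training offsets, and you declare resolving this surplus to be ``where the genuine work lies'' without doing it. The paper's resolution is to assert the super-additive bound
\begin{equation*}
\Delta t_{\{1,x\}}\;\ge\;\sum_{k}\Delta t_{\{1,jk\}},
\end{equation*}
i.e., the single deep plan's training offset dominates the \emph{sum} of the offsets of all of its layer-$j$ generators; this is what lets it pass from $\sum_k\mathcal{B}(p_{jk},P^{-q})-\Delta t_{\{1,x\}}$ to $\sum_k\bigl(\mathcal{B}(p_{jk},P^{-q})-\Delta t_{\{1,jk\}}\bigr)$ and then apply the hypothesis termwise. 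That super-additivity claim is the missing idea in your write-up: the per-generator monotonicity $\Delta t_{\{1,x\}}\ge\Delta t_{\{1,j\}}$ you do establish is, as you yourself note, not sufficient.

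It is worth adding that your hesitation is well founded. The paper offers no justification for the super-additive bound, and it does not follow from the cost model as stated: if several generators each leave nearly the same data uncovered as $p_x$ does, the sum of their offsets can exceed $\Delta t_{\{1,x\}}$. So to actually complete the argument you would have to either prove that inequality under additional structural assumptions on how the generators of Lemma~1 partition the removed models, or replace the sum-over-generators bound with one of the alternatives you sketch (a single dominating generator, or a layer-by-layer telescoping charge) --- neither of which the paper carries out.
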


\begin{proof}

For $\forall p_{x}$ in $L_{x}$, $\exists p_{j1}, p_{j2}$ in $L_j$, $p_{j+1}$ is obtained by removing one model from $p_{j1}$ or $p_{j2}$ according to Lemma 1. Let $P^{-q}$ represent the plan combinations for $Q$ that do not include the plan corresponding to $q$, if 
    \begin{equation}
        \begin{aligned}
            \mathcal{B}(p_{j1}, P^{-q}) - \Delta t_{\{1,j1\}} < \frac{1}{|M|+1-j}(\mathcal{B}(p_i) - \Delta t_{\{1,i\}})
        \end{aligned}
    \end{equation}
and
    \begin{equation}
        \begin{aligned}
            \mathcal{B}(p_{j2}, P^{-q}) - \Delta t_{\{1,j2\}} < \frac{1}{|M|+1-j}(\mathcal{B}(p_i) - \Delta t_{\{1,i\}}), 
        \end{aligned}
    \end{equation}
then,
    \begin{equation}
        \begin{aligned}
            &\quad\mathcal{B}(p_{j+1}, P^{-q}) - \Delta t_{\{1,j+1\}} \\
            &\leq \mathcal{B}(p_{j1}, P^{-q}) + \mathcal{B}(p_{j2}, P^{-q}) - \Delta t_{\{1,j+1\}}\\
            &\leq \mathcal{B}(p_{j1}, P^{-q}) + \mathcal{B}(p_{j2}, P^{-q}) - \Delta t_{\{1,j1\}} - \Delta t_{\{1,j2\}} \\
            &< \frac{2}{|M|+1-j}(\mathcal{B}(p_i) - \Delta t_{\{1,i\}}). 
        \end{aligned}
    \end{equation}


Since $|M| \geq j+1$, then 
    \begin{equation}
        \begin{aligned}
            \mathcal{B}(p_{j+1}, P^{-q}) - \Delta t_{\{1,j+1\}} 
            &< \frac{2}{|M|+1-j}(\mathcal{B}(p_i) - \Delta t_{\{1,i\}}) \\
            &< (\mathcal{B}(p_i) - \Delta t_{\{1,i\}}). 
        \end{aligned}
    \end{equation}

Since the maximum number of layers is determined by the RL plan with the most models, denoted as $|M|$. We set $j < x \leq |M|$. In $L_x$, each plan consists of at most $|M| - x + 1$ models. For any given plan $p_x$ in $L_x$, as shown in Lemma A.5, there exist $|M|+1 - j$ plans $\{p_j1, p_j2, ..., p_{j(|M|+1 -j)}\} $ in $L_j$ from which $p_x$ is generated by removing $|M| - j$ models. Since each plan $p_j$ in $L_j$ satisfies Equation (28), therefore,

    \begin{equation}
        \begin{aligned}
            &\quad\mathcal{B}(p_{x}, P^{-q}) - \Delta t_{\{1,x\}} \\
            &\leq \sum_{k=1}{|M|+1  - j}\mathcal{B}(p_{jk}, P^{-q}) - \Delta t_{\{1,x\}}\\
            &\leq \sum_{k=1}{|M|+1  - j}\mathcal{B}(p_{jk}, P^{-q}) - \sum_{k=1}{|M|+1  - j}\Delta t_{\{1,jk\}}\\
            &< \frac{|M|+1  - j}{|M|+1-j}(\mathcal{B}(p_i) - \Delta t_{\{1,i\}}) \\
            &\leq \mathcal{B}(p_i) - \Delta t_{\{1,i\}}).
        \end{aligned}
    \end{equation}


\end{proof}

According to Theorem \hyperlink{theoremA2}{1}, plans in $L_i$($i > 1$) can all be derived from plans in $L_1$. Therefore, for each plan in $L_1$ of query $q_i$, we remove all models $m$ with $\mathcal{B}(\{m, P^{-q_i}\}) -c_t(m) > 0$, resulting in plan $p^{q_i*}_j$ with the greatest benefit in the subtree of plan $p^{q_i}_j$. Then, taking benefit into consideration, we rank $p^{q_i*}_j$ and select the largest one for $q_i$ as follows:

\begin{displaymath}
\mathcal{B}(P) - \sum_{i=1}^{|m|}c_t(m_i) - \Delta t_{\{1,j\}},
\end{displaymath}

where the current plan corresponding to $q_i$ in $P$ is $p^{q_i*}_j$.
For example, as shown in Figure \hyperlink{fig5}{5(a)}, the batch of queries $Q = \{q_1, q_2, q_3\}$.  We only need to consider all plans in the first layer $L_1$. First, we calculate the benefit $\mathcal{B}(\{m, p^{q_2}_1, p^{q_1}_5\})$ for each model $m$ within $q_3$. We find that $\mathcal{B}(\{m_3, p^{q_2}_1, p^{q_1}_5\}) -c_t(m_3) = 2*c_t(\Delta r_3) - c_t(m_3) > 0$, that is, removing $m_3$ can bring benefits to the return time of $Q$. As shown in Figure \hyperlink{fig5}{5(b)}, we remove $m_3$ to obtain $p^{q_3*}_1$, and then we iteratively perform the same operations on all other plans $p^{q_3*}_j$ and then rank to obtain the final plan combination.



\section{Experiments}
    







\subsection{Experimental Setup}

All experiments are conducted on a computer running Ubuntu 18.04 64-bit with an Intel(R) Xeon(R) E5-2620 v2 @ 2.10GHz * 24 CPU, 64 GB RAM and 2 TB disk. 


\subsubsection{\textbf{Datasets.}}
We evaluate MLego on six datasets of varying sizes, ranging from 30,000 to 30 million documents. Among them, \textit{PubMED and NYTimes}\cite{NYtimes_and_PubMED} are the most commonly used datasets in topic modeling research, with NYTimes containing lengthy texts and PubMed comprising shorter texts. \textit{Amazon Fine Food Reviews}\footnote{https://www.kaggle.com/datasets/snap/amazon-fine-food-reviews}\cite{amazon}, \textit{Wiki-movie-plot}\footnote{https://www.kaggle.com/datasets/jrobischon/wikipedia-movie-plots}, \textit{News Category}\footnote{https://www.kaggle.com/datasets/rmisra/news-category-dataset}\cite{categoryDS1, categoryDS2}, and \textit{Realnews} \cite{realnews} are real-world datasets, where each document contains various attributes such as time, score, and others, making them suitable for OLAP query generation. Notably, \textit{Realnews} contains over 30 million documents, making it ideal for scalability evaluation.

\subsubsection{\textbf{Query Workloads.}}
To the best of our knowledge, there is no off-the-shelf benchmark for model merging and model plan searching. To solve the problem, we followed \cite{vldb_Hasani} to generate two types of query workloads from the above datasets. Random workload involves queries where the query predicates are chosen at random, and OLAP workload involves queries with predicates on attributes that impose OLAP hierarchies. Table \hyperlink{workload}{2} illustrates some of them.

\begin{table}[htbp]
\caption{Some of the queries used in the experiments}\hypertarget{workload}{}
\centering
\begin{tabular}{c|c|m{4.4cm}}
\toprule
\centering\textbf{Dataset}&\textbf{$Q\#$}&\textbf{Quries}\\
\midrule
    \centering PubMed & q1 & SELECT $lda$ FROM PubMed WHERE $id$ $<$ 20423 \\ 
\midrule
    \centering NYTimes & q1 & SELECT $lda$ FROM NYTimes WHERE $id$ $<$ 42394\\ 
\midrule
    \centering Realnews & q1 & SELECT $lda$ FROM Realnewes WHERE $time$ IN '2019.01'\\ 
\toprule
\end{tabular}
\label{tab1_query}
\end{table}




We generate random query workloads from PubMed, NYTimes, Realnews, and Amazon Food Reviews, with the WHERE clause specifying the \textit{id} range. To ensure robustness, data is shuffled using 10 random seeds.  
OLAP query workloads, used for model merging and plan searching, are derived from News Category, Wiki-movie-plot, Realnews, and Amazon Food Reviews. We divide OLAP cuboids into 1\%–10\% of tuples and sample ranges based on model requirements. Queries are generated by considering all possible subsets of cuboids with the same predicate. 
A summary of datasets and query workloads is provided in Table \hyperlink{dataset}{3}.

\begin{table}[H]
    \caption{Dataset Statistics}\hypertarget{dataset}{}
    \begin{center}
    \begin{tabular}{ccc}
        \hline
        \textbf{Dataset} &\textbf{\#Docs}& \textbf{Workload Type} \\
        \hline
            Amazon Food Reviews & 568,454 & OLAP \& Random \parbox[c]{5cm}{} \\
            Realnews  & 32,797,763& OLAP\&Random  \parbox[c]{5cm}{} \\
            Wiki-movie-plot & 33,869& OLAP   \parbox[c]{5cm}{} \\
            News Category  & 202,372& OLAP  \parbox[c]{5cm}{}\\
            PubMed  & 141,043& Random  \parbox[c]{5cm}{} \\
            NYTimes  & 102,660 & Random\parbox[c]{5cm}{} \\
        \hline
    \end{tabular}
    \label{tab1_dataset}
    \end{center}
\end{table}


\subsubsection{\textbf{Evaluation Metrics.}}
We evaluate our methods against the baselines across two criteria: time and model accuracy. 

For model accuracy, we employ a commonly used metric in topic modeling research - Log Predictive Probability (lpp). Under this metric, a higher score is better, as a better model will assign a higher probability to the held-out words.



In addition, We followed \cite{vldb_Hasani} to introduce two metrics: \textbf{(1) Speedup Ratio(SR)}, and \textbf{(2) Difference in Performance(DP)}. These metrics are based on time and lpp, where SR is defined as the ratio of time taken to build a model by model merging and DP measures the difference in lpp between the exact and approximate models.



\subsubsection{\textbf{Baselines.}}
We select several baselines for comparison. We use abbreviations MGS and MVB for the VB-based model merging method and GS-based model merging method, respectively. We compare our LDA model merging methods against the following methods. (i) \textbf{ORIG} executes analytic queries by building the ML model from scratch. (ii) \textbf{LDA*}\cite{LDA*}, deployed at Tencent for internal topic modeling services, integrates multiple recent samplers, including AliasLDA\cite{AliasLDA}, F+LDA\cite{F+LDA}, LightLDA\cite{LightLDA}, and WarpLDA\cite{WarpLDA}. (iii) \textbf{OGS}\cite{jmlr_online_17} was an algorithm that uses a novel online inference method for large-scale latent variable models.


Next, we evaluate our plan searching algorithms by comparing them with (a) \textbf{NAI} approach (generate-and-rank: enumerate all plans and calculate the score, then rank them.) and (b) \textbf{GRA}\cite{vldb_Hasani} method (A baseline that constructs a Directed Acyclic Graph (DAG) based on the WHERE clause of queries and utilizes the shortest path strategy to get the optimal plan.). We use the abbreviations PSOA and PSOA++ for our plan searching algorithm and optimization method in Section 5.B.

For our model merging methods and baselines about LDA, we set the number of topics to 100, the maximum number of iterations to 100, and keep other settings at their default configurations. We run MGS and MVB in single-threaded mode for all experiments. To ensure a fair comparison, we configure LDA* to run on Hadoop with one worker.

Before conducting the experiments for efficiency of query plan searching, we derive a performance loss function based on model merging experiments. It is worth noting that, during the analysis of the performance loss, we assume it monotonically decreases with the number of model merging, a hypothesis we validate through experiments. Our experiments regarding query plan searching efficiency remain unaffected by the performance loss function.

\subsection{Experimental Results}

\subsubsection{\textbf{Effect of Model Merging.}} 
We evaluate whether the performance loss caused by model merging aligns with the assumption in Section 5.2. We consider the impact of different model numbers on the model merging effects. For each query, we build a model on the scratch corresponding to the query, and then randomly divide the data into 2, 3, 4 until 30 partitions, train models separately, and then merge models. As shown in Figure \hyperlink{fig6}{6}, for both random and OLAP query workloads, performance tends to decrease monotonically with an increasing number of model merging (i.e., \#candidate models), and MGS demonstrates lower performance loss than MVB. Notably, lpp for MGS reaches its maximum value of -7.52 after the first merging, and it gradually decreases with an increasing number of model merging. This is in line with our assumption, and we can consider the plan with a merge count of 1 as the first layer of the $l_p$ list in plan searching.

\begin{figure}[htbp]
\centerline{\includegraphics[width=0.5\textwidth]{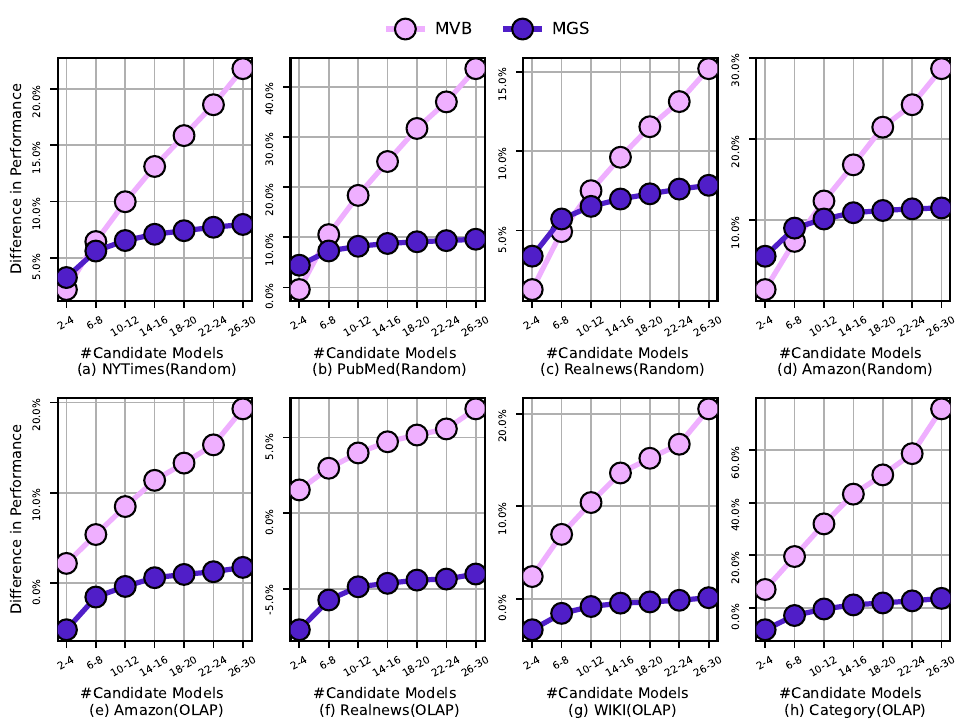}}
\caption{Evaluating approaches for building an approximate ML model for \textbf{Random} and \textbf{OLAP} query workloads.}
\label{fig_rq1_all_pl}
\end{figure}


\subsubsection{\textbf{Efficiency of Model Merging.}}To study the efficiency of our model merging methods, we compared our method with the baselines in terms of model building time.
We computed SR of our two model merging methods relative to LDA* and OGS under the condition of ample materialized models, as shown in Figure \hyperlink{fig7}{7}. The merging time for both model merging methods was very close and quite short. Even when compared to distributed model training systems like LDA*, MLego showed a significant improvement in model construction time due to the reuse of materialized models. As shown in Figure \hyperlink{fig8}{8}, MLego scales well, demonstrating greater advantages as data size increases, making it suitable for interactive analysis on large datasets.


\begin{figure}[htbp]
\centerline{\includegraphics[width=0.5\textwidth]{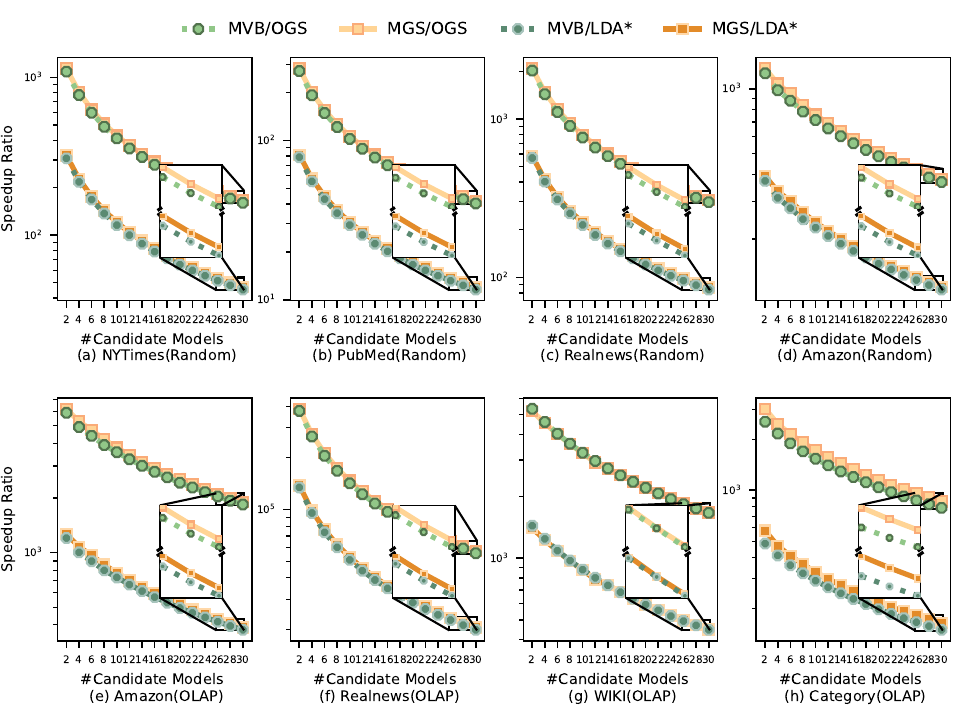}}
\caption{Efficiency of approaches for building an approximate ML model for \textbf{Random} and \textbf{OLAP} query workloads.}
\label{fig_efficiency_model_merge_all_SR}
\end{figure}

\begin{figure}[ht]
\centering
\subfloat[Amazon]{\includegraphics[width=1.6in]{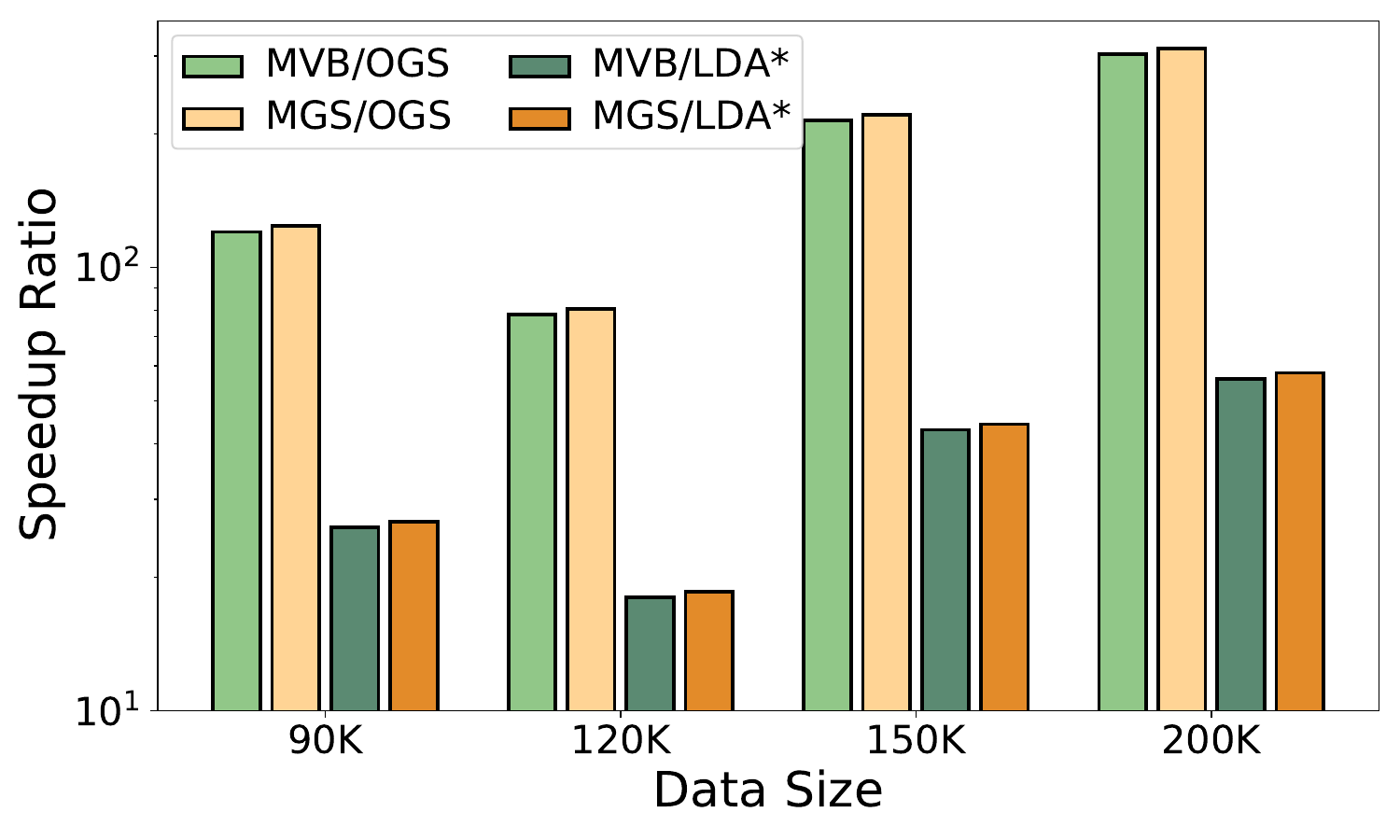}%
\label{fig_first_case1}}
\hfil
\centering
\subfloat[Realnews]{\includegraphics[width=1.6in]{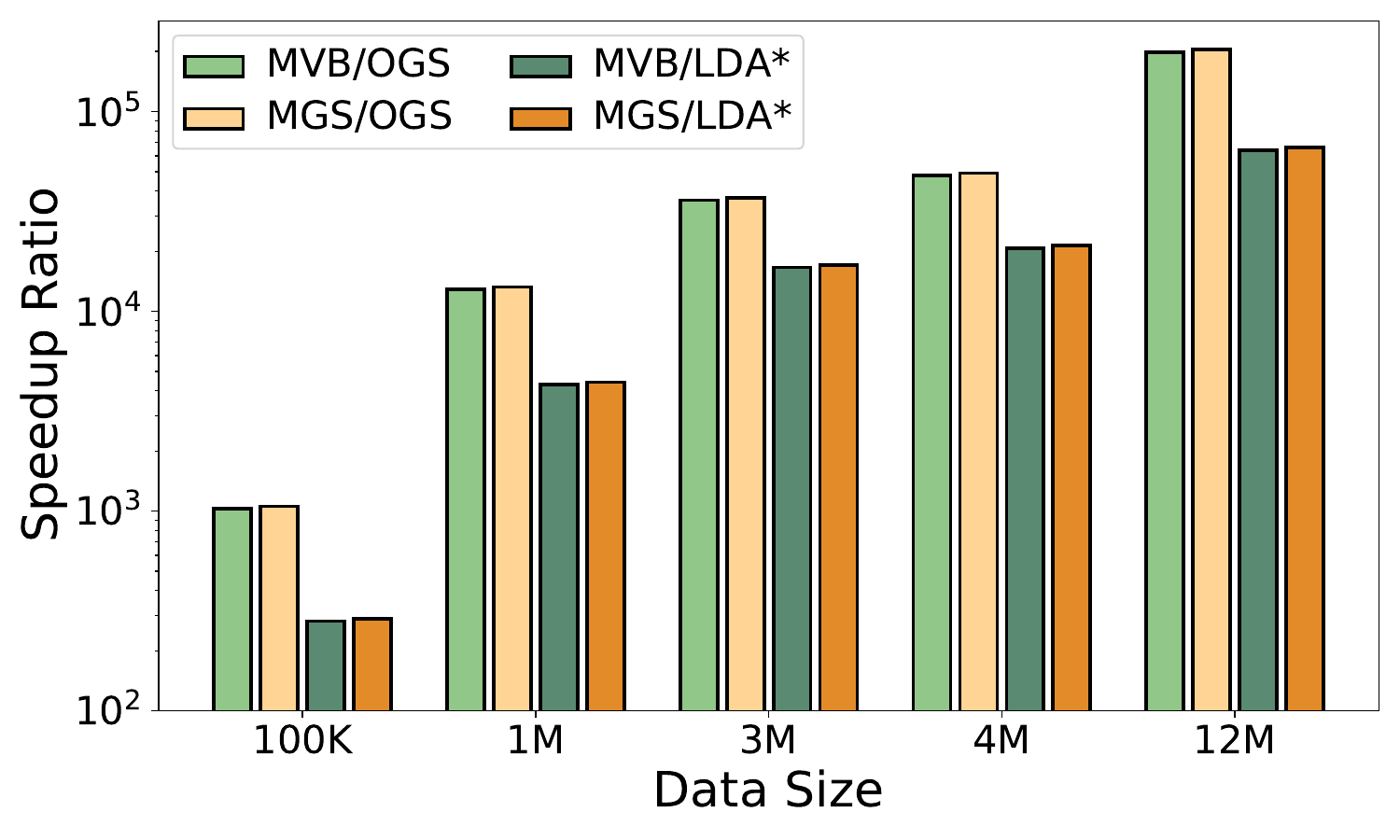}%
\label{fig_second_case1}}
\caption{Scalability of model merging methods}\hypertarget{fig8}{}
\label{fig_7}
\end{figure}

\begin{figure}
\centering
\subfloat[Amazon]{\includegraphics[width=1.6in]{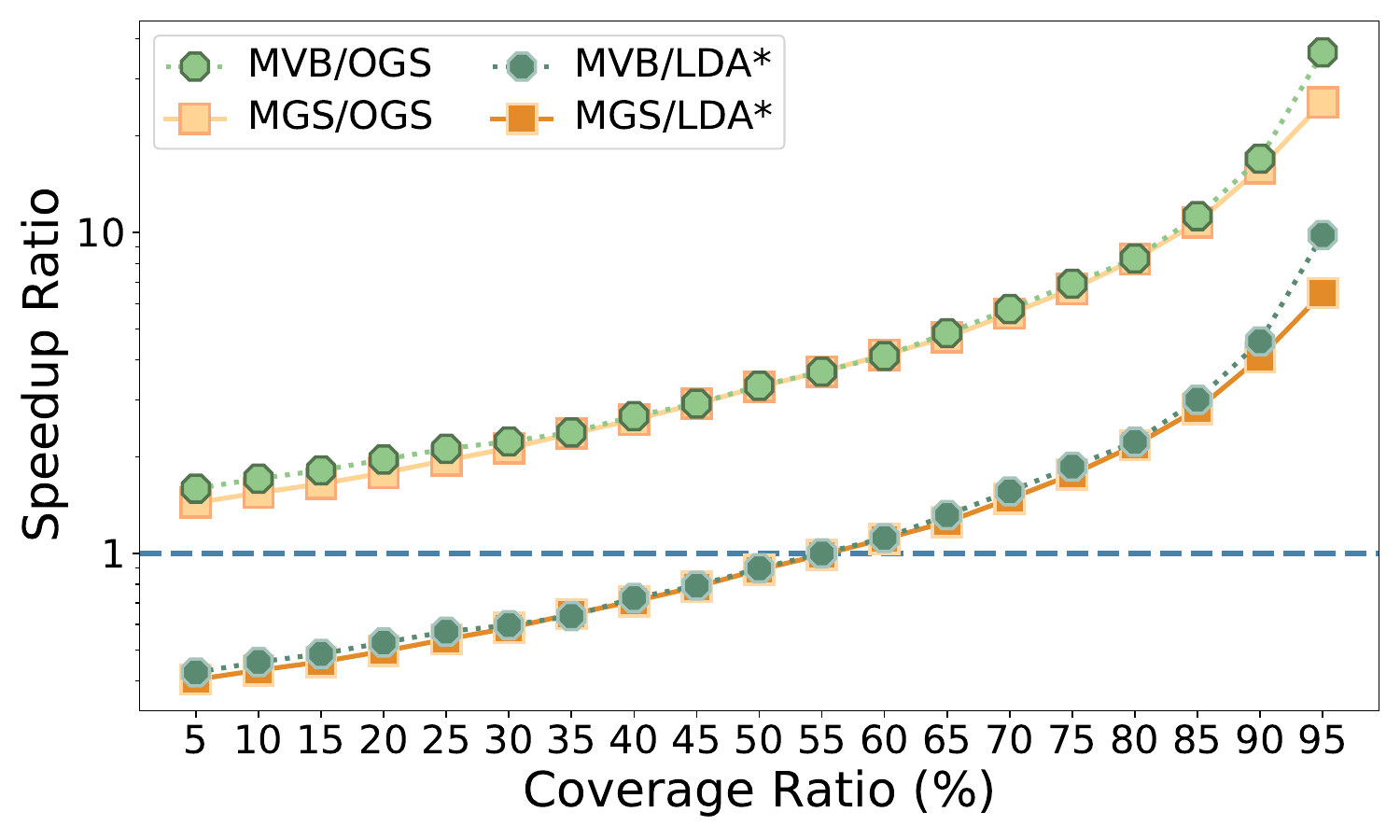}%
\label{fig_first_case2}}
\hfil
\centering
\subfloat[Realnews]{\includegraphics[width=1.6in]{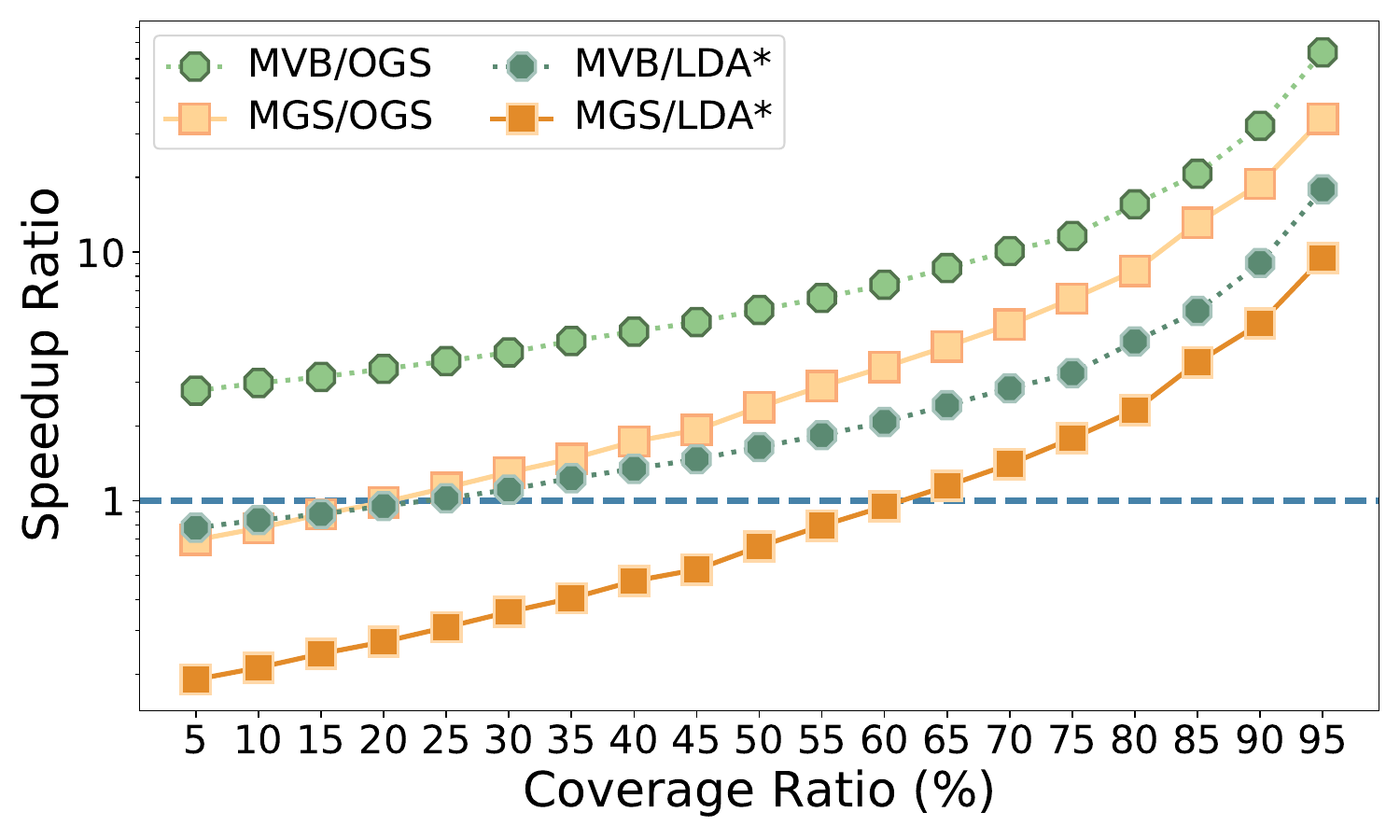}%
\label{fig_second_case2}}
\centering
\caption{Evaluating impact of Coverage Ratio on building an approximate model.}\hypertarget{fig9}{}
\label{fig_2}
\end{figure}

Furthermore, we studied the impact of coverage ratio on the efficiency of MLego. The coverage ratio reflects the extent to which we can answer queries using materialized models. A 100\% coverage ratio indicates that we can fully respond with pre-built models, whereas a 0\% coverage ratio means we need to construct the model from scratch. Figure \hyperlink{fig9}{9} shows SR under different coverage ratios. We can observe that the slowest MLego is equal to the baseline efficiency at a coverage ratio of 55\%, and MLego based on MGS outputs $m^*$ more quickly than LDA*. As the coverage ratio continues to increase, the model construction time decreases. When the coverage ratio reaches 100\%, the model can be constructed in milliseconds. However, at this point, the time spent on plan searching becomes a significant portion of the entire workflow. That is why we need to address the efficiency of plan searching.



\begin{figure*}[!t]
\centering
\subfloat[Amazon]{\includegraphics[width=1.6in]{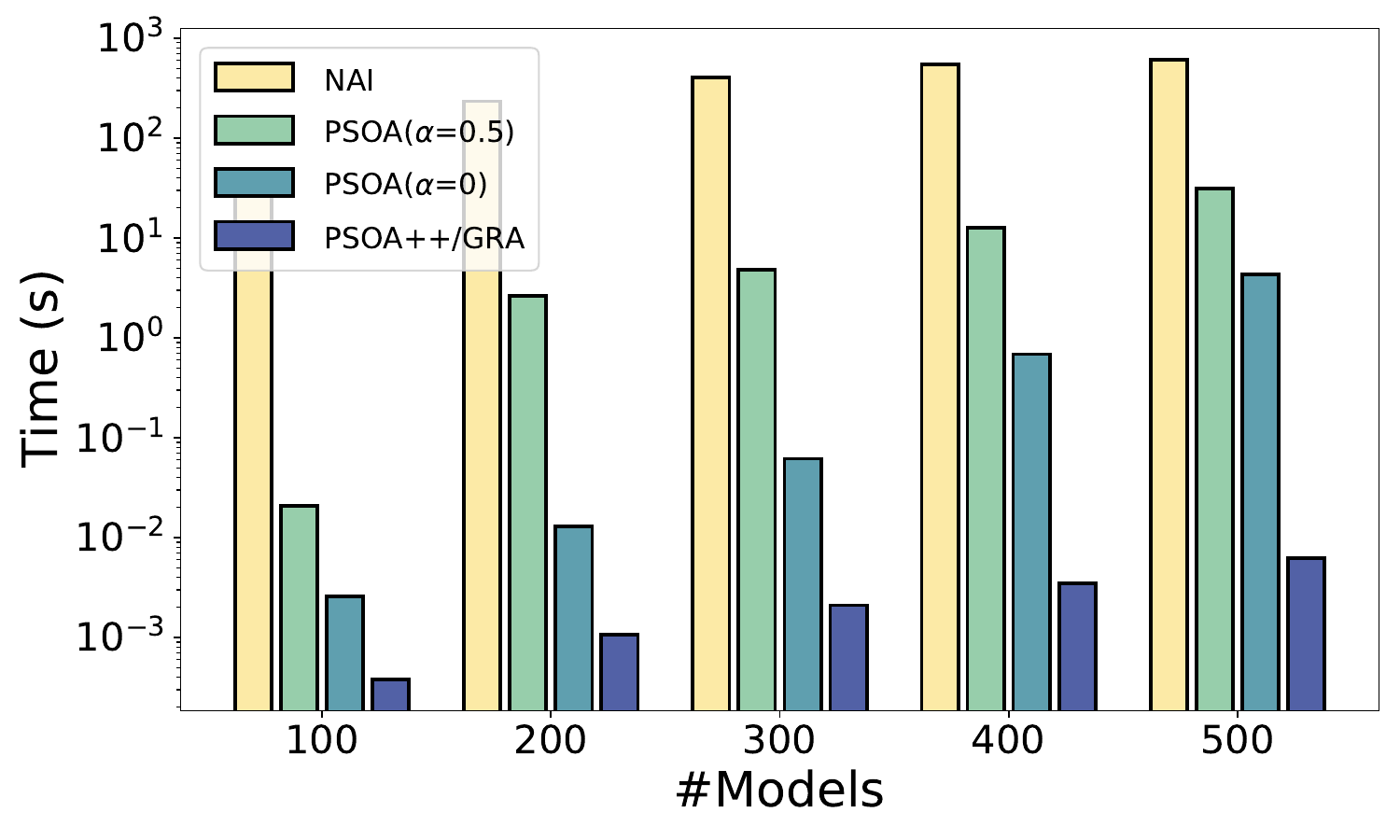}%
\label{fig_first_case3}}
\hfil
\centering
\subfloat[Category]{\includegraphics[width=1.6in]{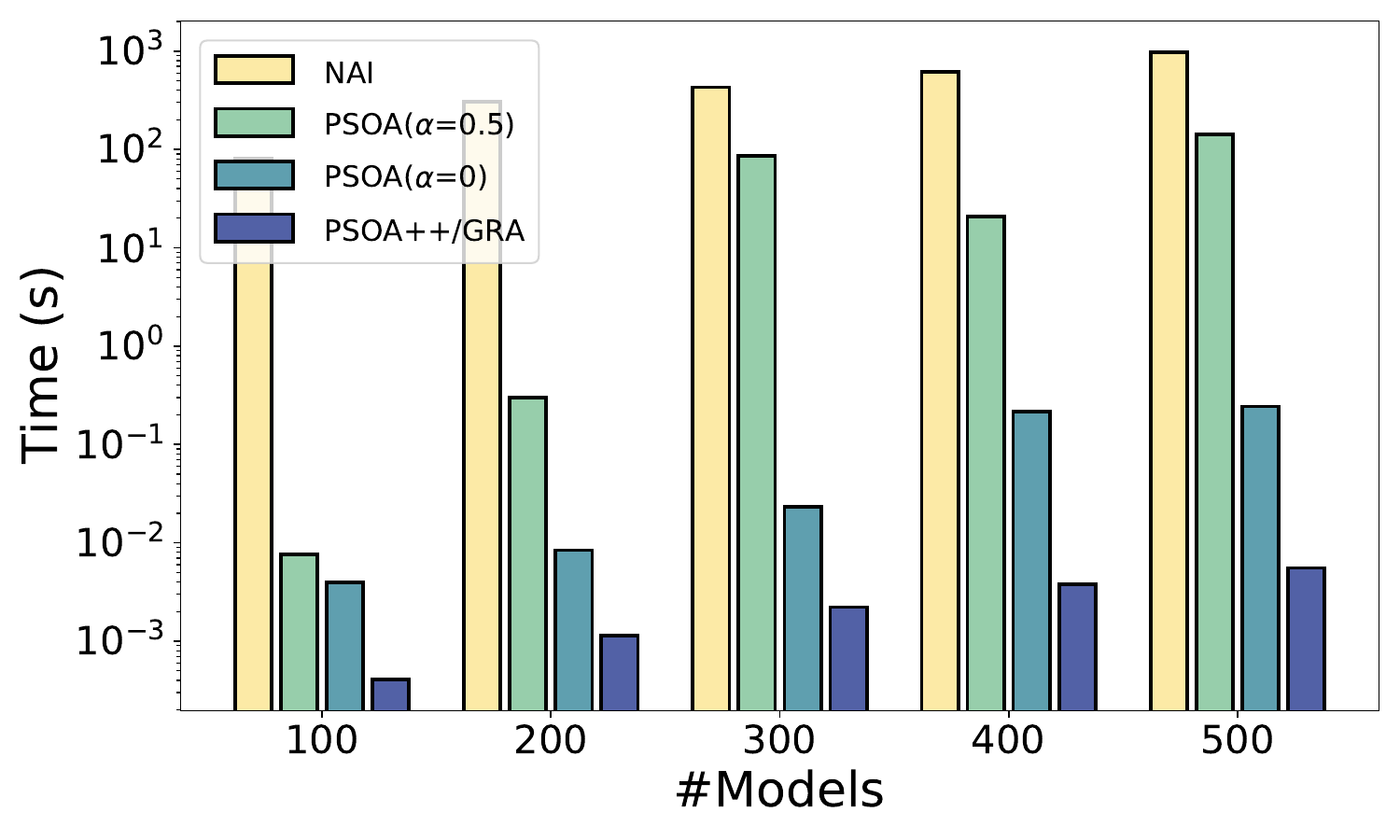}%
\label{fig_second_case3}}
\hfil
\centering
\subfloat[Realnews]{\includegraphics[width=1.6in]{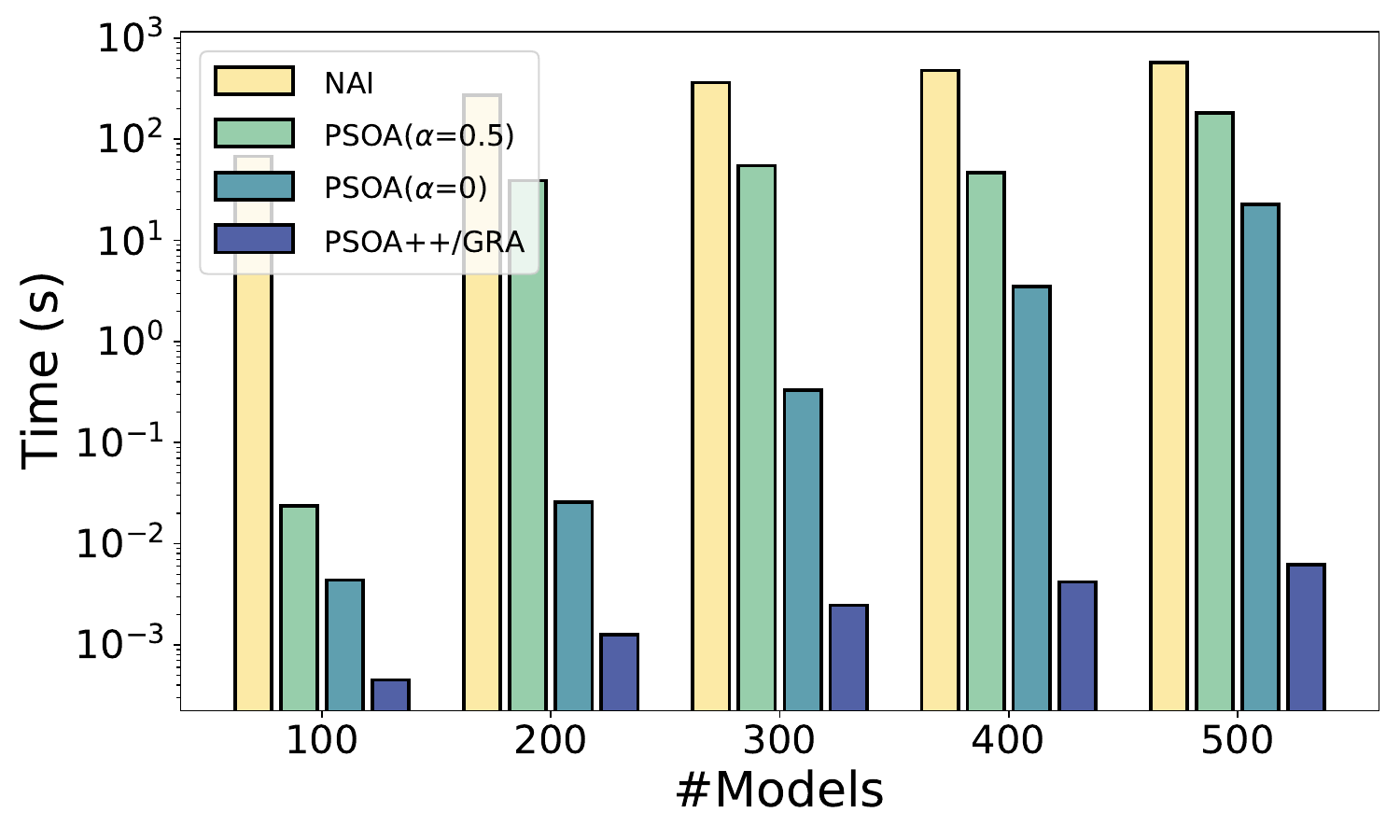}%
\label{fig_second_case4}}
\hfil
\centering
\subfloat[WIKI]{\includegraphics[width=1.6in]{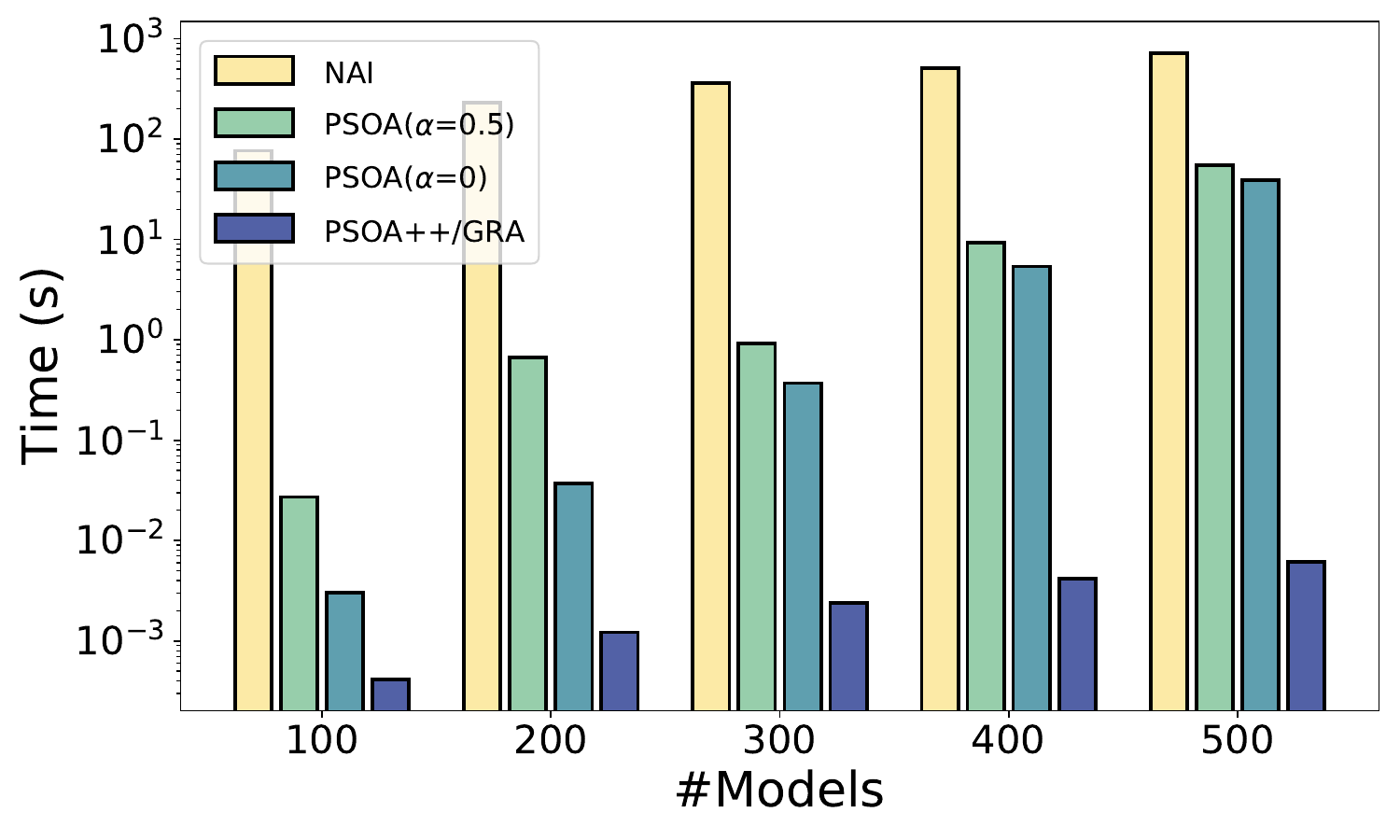}%
\label{fig_second_case5}}
\caption{Efficiency of plan searching algorithm}\hypertarget{fig10}{}
\label{fig_sim}
\end{figure*}

\subsubsection{\textbf{Time Reduction of Plan Searching Algorithm.}}

MLego searches for the optimal plan in the model set based on the weight parameter $\alpha$. We constructed five model sets for each OLAP query workload, and we evaluated the efficiency of plan searching on them. As shown in Figure \hyperlink{fig10}{10}, our plan searching method PSOA exhibits exponential speedup compared to NAI. In addition, when $\alpha$ is 0, our method shows a significant improvement in speed since it only needs to compute for two lists. Furthermore, when the maximum number of models in the RL plan satisfies Theorem \hyperlink{theorem5.2}{3} or Theorem \hypertarget{theorem4}{4}, the problem becomes finding $p^*$ that maximizes model coverage over data, and our method PSOA++ aligns with GRA where GRA is only applicable to this situation.

We further explore the factors affecting the efficiency of PSOA. In Figure \hyperlink{fig11}{11}, as the number of models within a query range increases, our method still provides a substantial speedup, enabling MLego to deliver rapid plan searching capabilities in support of model merging. In addition, we evaluate the impact of $\alpha$ on PSOA, as shown in Figure \hyperlink{fig12}{12}. When alpha is set to 0.4, the search time is the slowest because it considers both time cost and performance loss, which have opposite directions in hierarchical plan generation. However, due to the "push down" operation, PSOA can still quickly compute a potentially optimal plan compared to other methods. 




\subsubsection{\textbf{Evaluation of Batch Query Optimization.}}
We evaluate the cost and effectiveness of batch query optimization on the largest model set corresponding to each dataset. For each query workload, we compute the batch query optimization cost and benefit for all possible query combinations. As shown in Figure \hyperlink{fig14}{14(a)}, we use the word count to represent the benefit, i.e., the reduction of training costs. Batch size represents the number of queries in a batch. As the batch size increases, the benefit also increases because the common data among all queries in the batch can be reused more frequently. Conversely, as shown in Figure \hyperlink{fig13}{13(a)}, when the batch size increases, the cost of batch query optimization also increases because the searching time linearly grows with the number of queries in the batch. Furthermore, we evaluate the impact of the number of candidate models covered by queries in the batch on the cost, as shown in Figure \hyperlink{fig13}{13(b)}. When \#candidate models per query is 30, the cost remains small regardless of the batch size. As the model coverage increases, the cost of batch query optimization also increases because the number of plans for each query in $L_1$ gradually grows. However, compared to the benefit of batch query optimization, the cost is highly cost-effective, and as the cost increases, the corresponding benefit also increases. Compared to the benefit brought by batch query optimization, the optimization cost is relatively small, as shown in Figure \hyperlink{fig14}{14(b)}. Additionally, as the number of models in each query increases, the benefit increases, and the searching cost becomes a smaller proportion.

\begin{figure}[ht]
\centering
\subfloat[Amazon]{\includegraphics[width=1.6in]{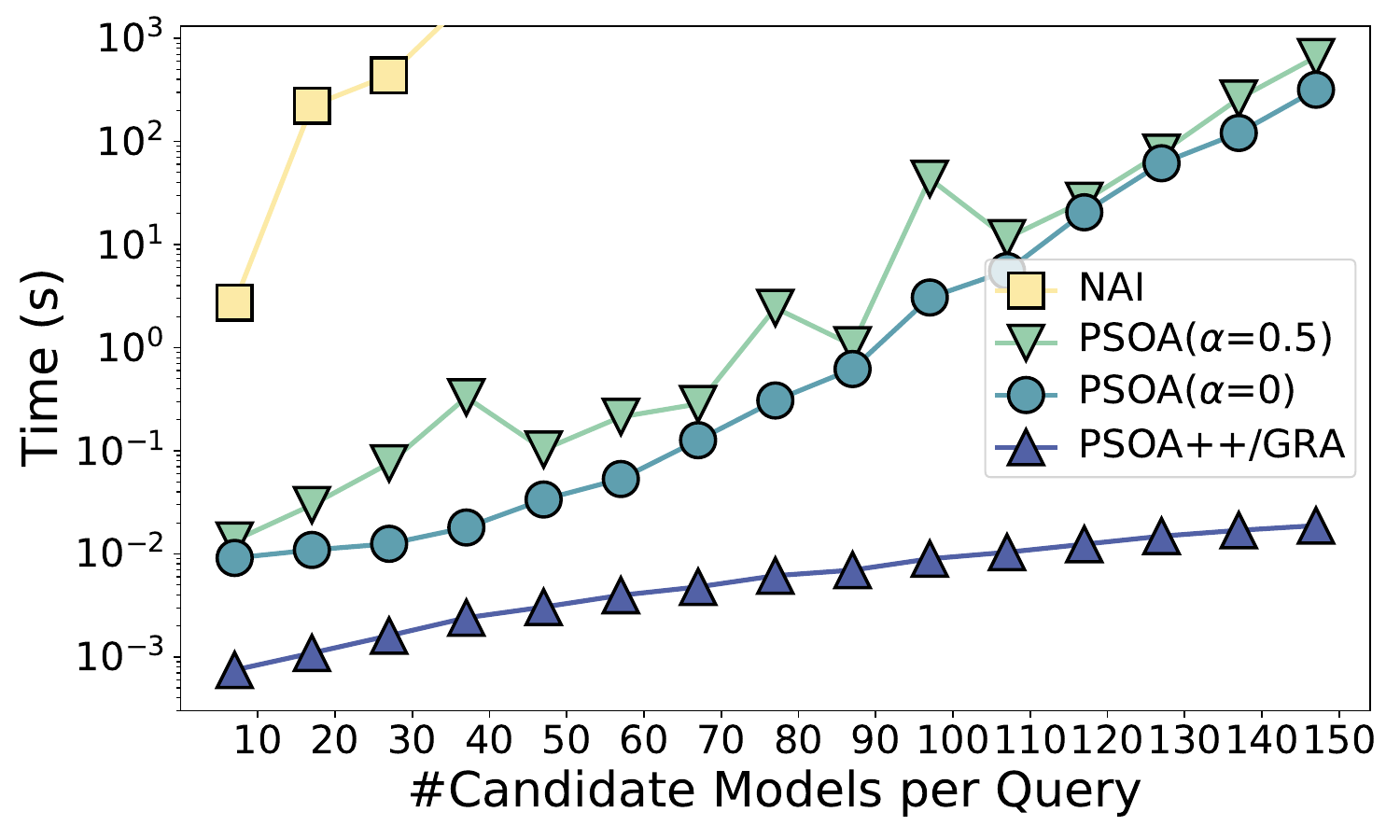}%
\label{fig_first_case6}}
\hfil
\centering
\subfloat[Realnews]{\includegraphics[width=1.6in]{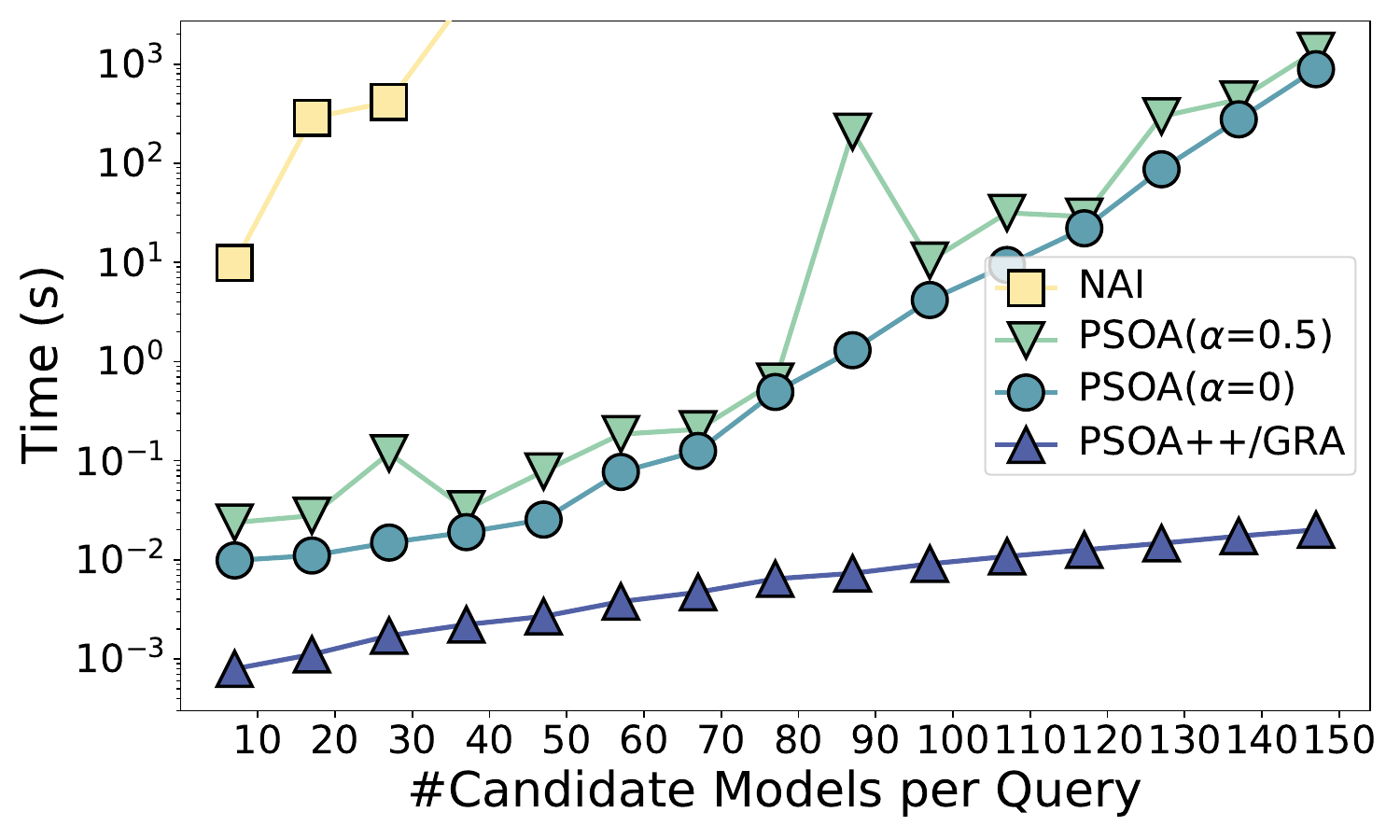}%
\label{fig_second_case6}}
\caption{Evaluating the impact of \#Candidate Models per query on plan searching}\hypertarget{fig11}{}
\label{fig_3}
\end{figure}

\begin{figure}[ht]
\centering
\subfloat[Amazon]{\includegraphics[width=1.6in]{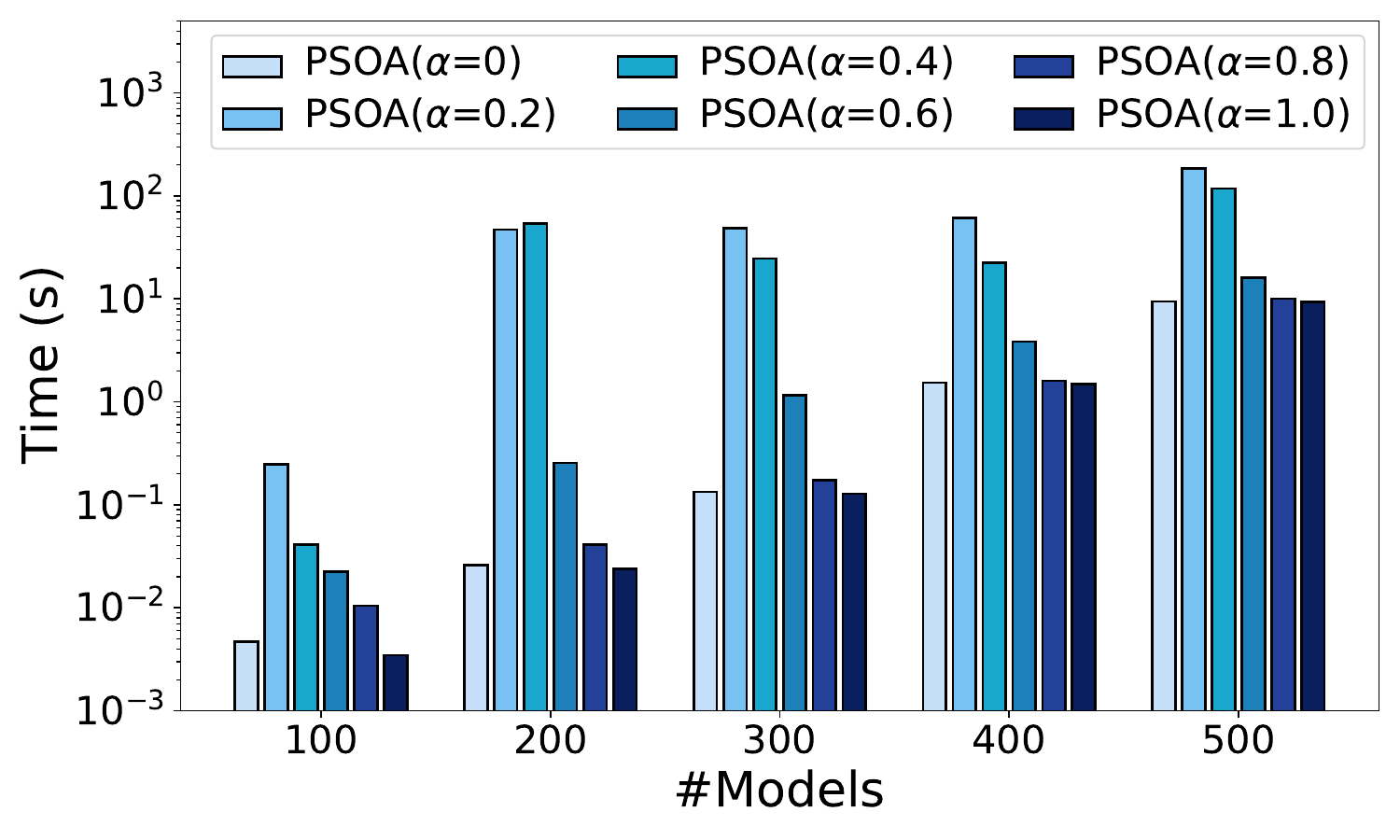}%
\label{fig_first_case7}}
\hfil
\centering
\subfloat[Realnews]{\includegraphics[width=1.6in]{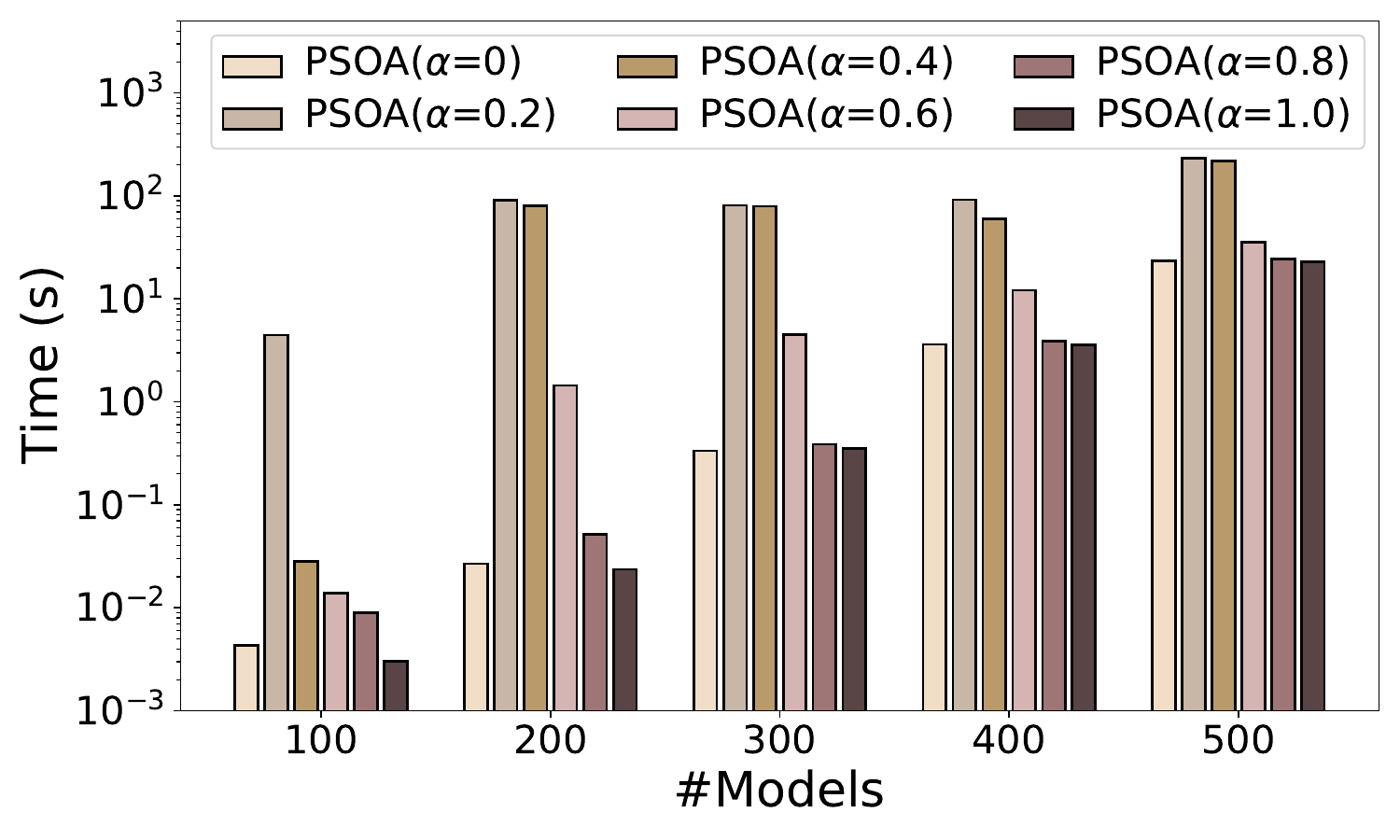}%
\label{fig_second_case7}}
\caption{Evaluating the impact of the weight parameter on plan searching}\hypertarget{fig12}{}
\label{fig_4}
\end{figure}

\begin{figure}[ht]
\centering
\subfloat[]{\includegraphics[width=1.6in]{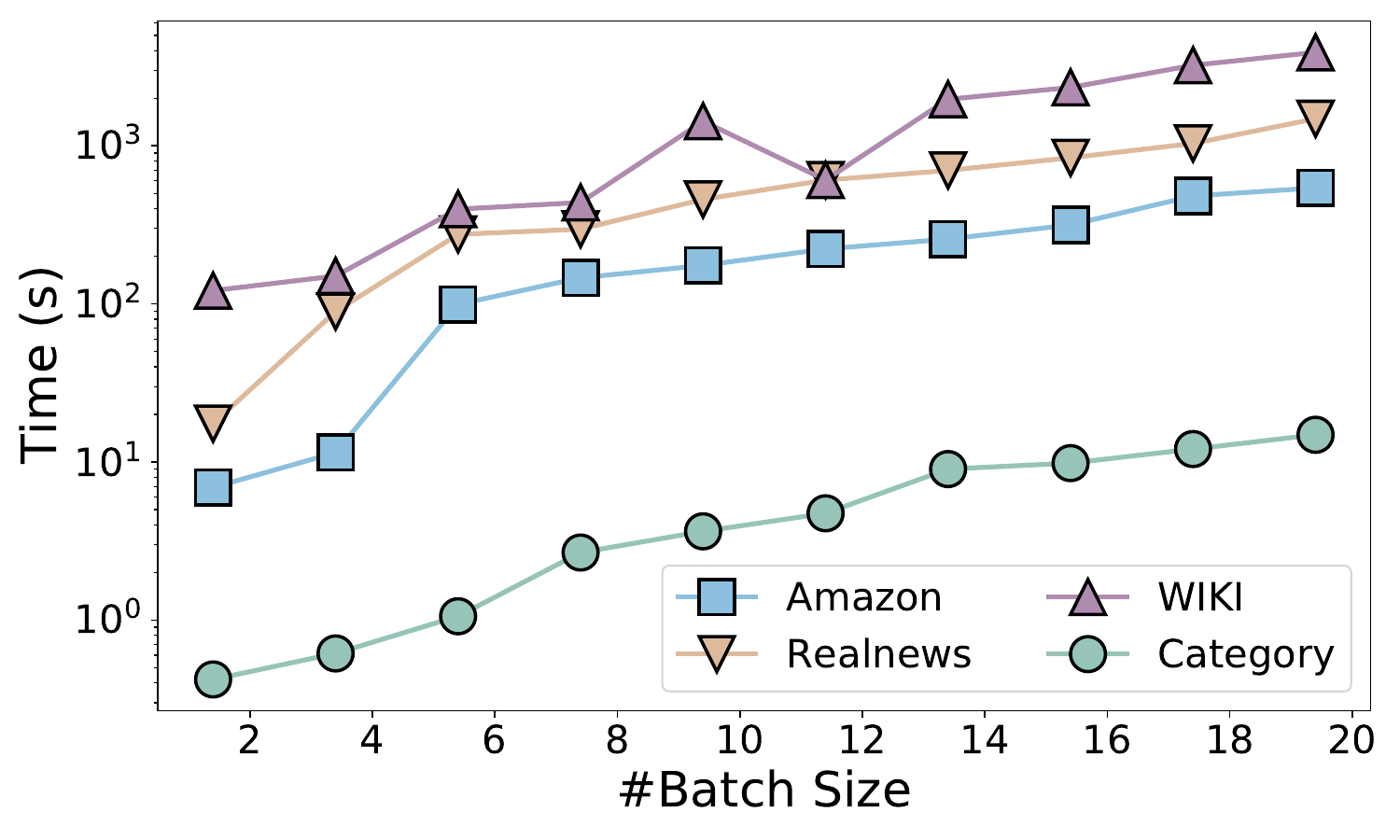}%
\label{fig_first_case8}}
\hfil
\centering
\subfloat[]{\includegraphics[width=1.6in]{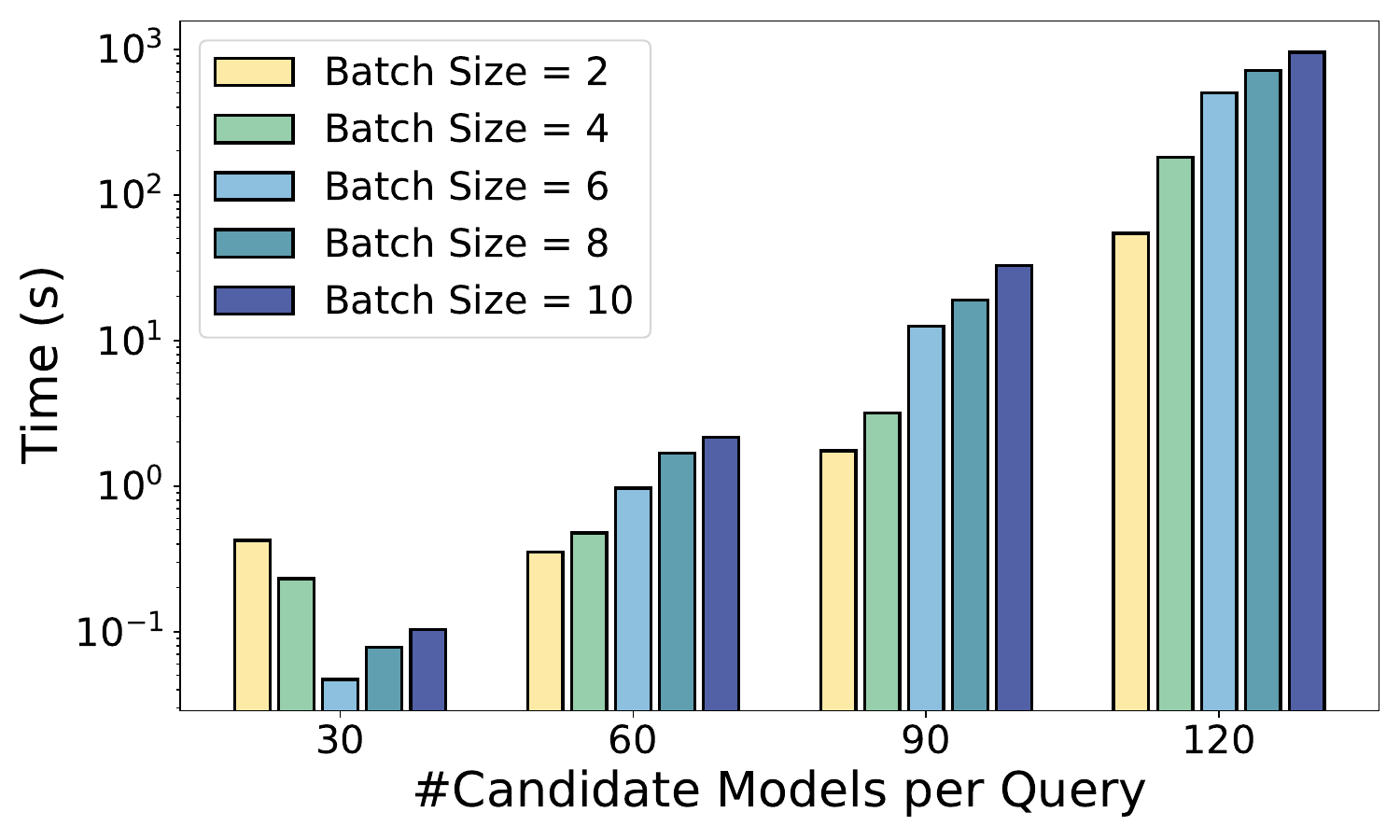}%
\label{fig_second_case8}}
\caption{Efficiency of batch query plan searching}\hypertarget{fig13}{}
\label{fig_5}
\end{figure}

\begin{figure}[ht]
\centering
\subfloat[Benefit]{\includegraphics[width=1.6in]{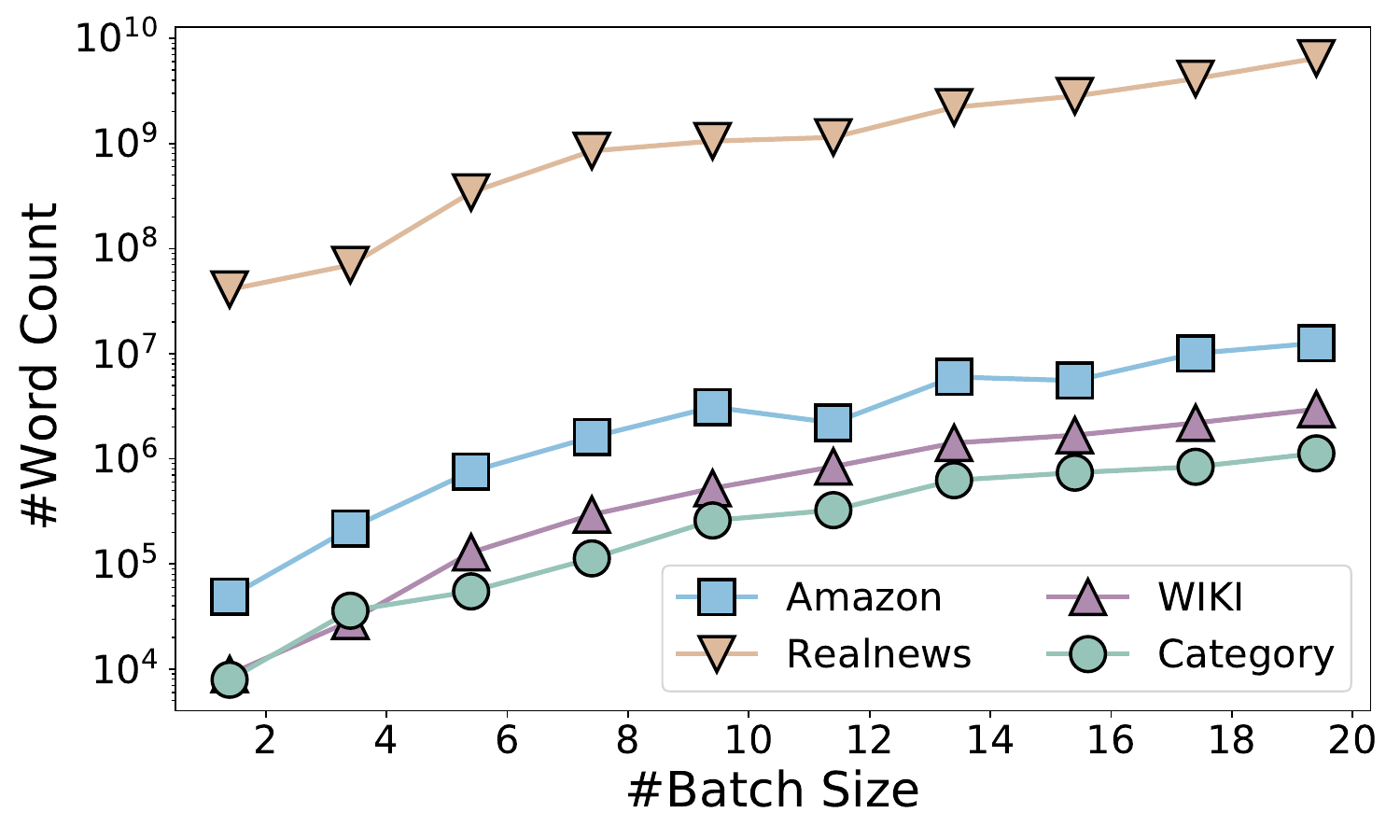}%
\label{fig_first_case9}}
\hfil
\centering
\subfloat[Benefit vs Cost]{\includegraphics[width=1.6in]{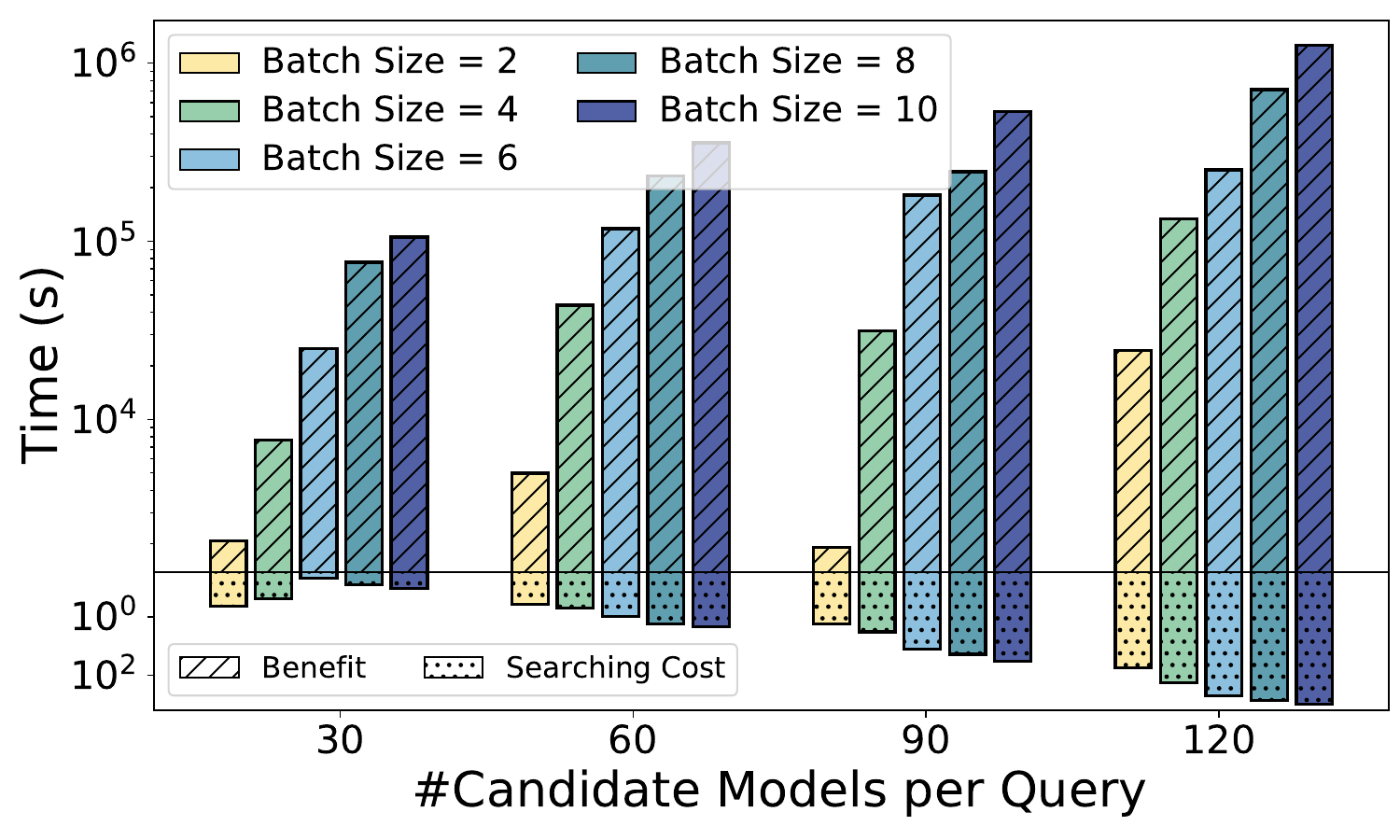}%
\label{fig_second_case9}}
\caption{Benefit and Cost of batch query optimization}\hypertarget{fig14}{}
\label{fig_6}
\end{figure}

\subsection{Usage Scenario}

Here, we present a usage scenario to demonstrate MLego's topic modeling building capabilities. 

To effectively showcase the model-building capabilities of MLego, we augment the Realnews dataset by adding two additional columns—longitude and latitude—and generate random values for these attributes. The resulting dataset, referred to as the Augmented Realnews, will be utilized for illustrating various usage scenarios. Our primary objective is to analyze the geographical distribution of news topics. For example, we aim to explore the topics of news articles associated with areas near the Louvre Museum. As illustrated in the figure, we select the Augmented dataset and focus on the geographical attributes, namely longitude and latitude, to analyze the topic distribution within specific regions. Given that our goal is to explore the topics related to the area around the Louvre, without a specific target or hypothesis, this scenario may involve multiple rounds of topic modeling and model construction for various subsets of the data.

In setting the alpha parameter, we prioritize optimizing training time. Within the MLego framework, users can not only visualize the distribution of document collections across different geographic areas but also examine the topic distribution of pre-trained models within those regions. Upon initiating the analysis, MLego generates the topic distributions for the selected geographical area, taking into account the user-specified preferences, including the alpha setting. The resulting distribution can be visualized, revealing, for instance, a topic characterized by terms such as “teriyaki, BBQ, beef, meat.” As discussed in Section 2, MLego facilitates interactive analysis by leveraging relationships among existing models, enabling rapid model construction and adaptation based on user input. This approach is orthogonal to previous works such as \cite{topicLens, B9, B11, B13, B14, B22, B44}, where users can perform granular operations—ranging from document-level to topic-level—on the results returned by MLego. In combination, these features provide a comprehensive framework for interactive analysis of text topics within geographic contexts.

\begin{figure}[H]
\centerline{\includegraphics[width=0.47\textwidth]{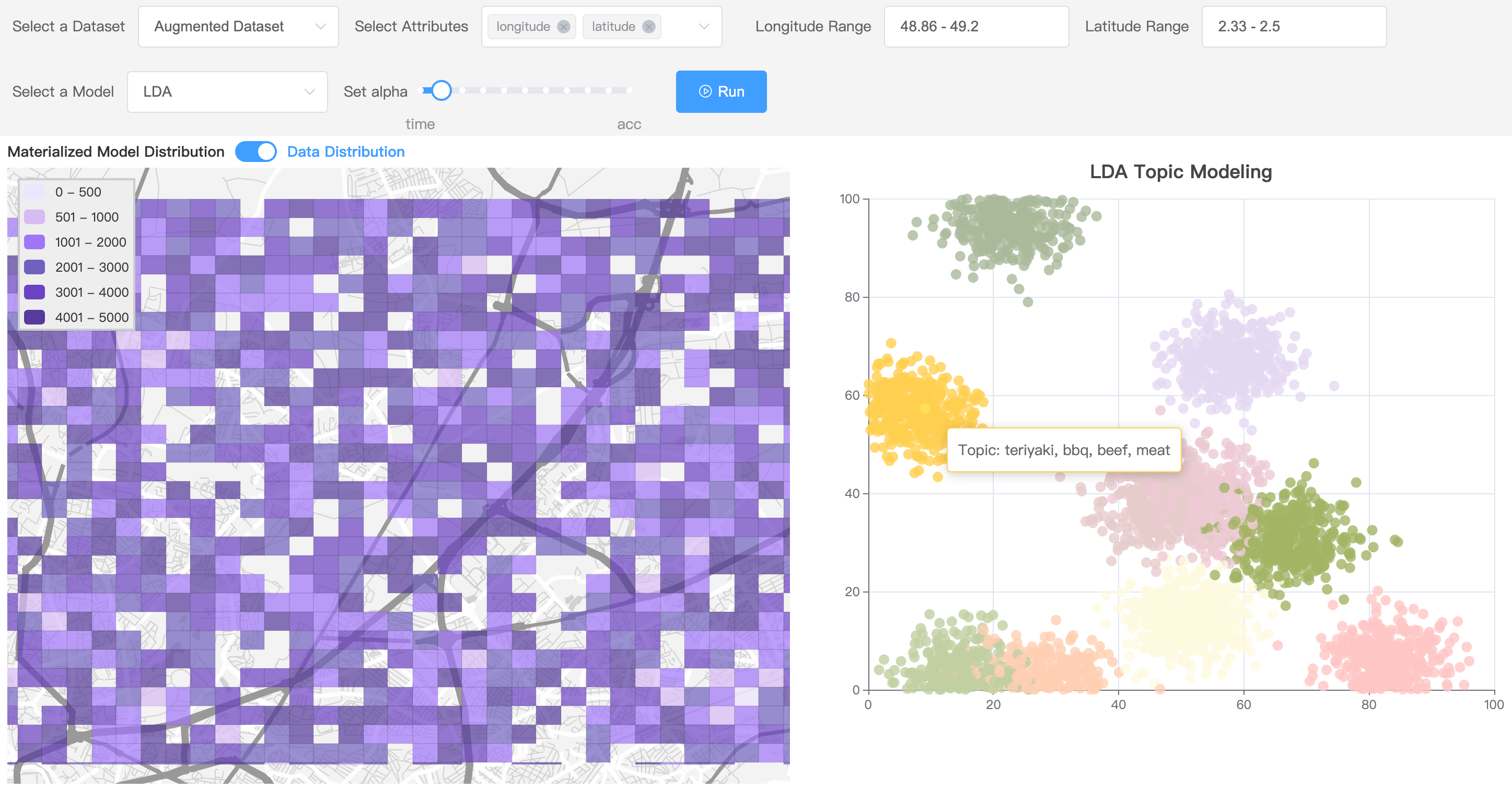}}
\caption{Usage Scenario}\hypertarget{fig4}{}
\label{fig_usage_scenario}
\end{figure}

\section{Conclusion}
In this paper, we addressed the challenge of efficiently supporting interactive, query-driven topic modeling. We proposed MLego, a novel analytic query framework that accelerates real-time topic exploration by reusing pre-built models and employing a hierarchical plan search strategy for optimal query execution. Additionally, we introduced query reordering techniques to further optimize batch queries, reducing computational overhead. To evaluate MLego, we conducted extensive experiments on real-world datasets, demonstrating its superior query efficiency and scalability. Integrated into a visual analytics prototype, MLego enables real-time, user-driven exploration of large-scale textual datasets, complementing existing visual analytics approaches that primarily focus on model steering. Our work bridges the gap between interactive data analysis and scalable topic modeling, providing an effective solution for real-time text exploration.

\bibliographystyle{IEEEtran}
\bibliography{IEEEabrv, ref}

\vfill

\end{document}